\documentclass{article}

\usepackage{microtype}
\usepackage{graphicx}
\usepackage{subcaption}
\usepackage{booktabs} 
\usepackage{enumitem} 

\usepackage{hyperref}



\usepackage[preprint]{icml2026}


\usepackage[utf8]{inputenc}
\usepackage[T1]{fontenc}

\usepackage{url}
\usepackage{booktabs}
\usepackage{amsfonts}
\usepackage{nicefrac}
\usepackage{microtype}
\usepackage{xcolor}

\usepackage{amsmath,bm}
\usepackage{amsthm}
\usepackage{amssymb}
\usepackage{mathtools}
\usepackage{centernot}
\usepackage{algorithm}
\usepackage{comment}
\usepackage{tikz}
\usepackage{amsmath, amssymb}
\usetikzlibrary{positioning, shapes, arrows.meta, calc, backgrounds, fit, decorations.pathreplacing}
\usetikzlibrary{shadows}

\definecolor{nodeblue}{RGB}{66, 133, 244}
\definecolor{rootred}{RGB}{219, 68, 55}
\definecolor{linkgray}{RGB}{150, 150, 150}
\definecolor{highlightbg}{RGB}{230, 240, 255}
\definecolor{cloudbg}{RGB}{245, 245, 245}

\newcommand{\E}{\operatorname{\mathbb E}}
\newcommand{\Erdos}{Erd\H{o}s}
\newcommand{\Renyi}{R\'enyi}

\newtheorem{lemma}{Lemma}
\newtheorem{remark}{Remark}
\newtheorem{corollary}{Corollary}
\newtheorem{theorem}{Theorem}
\newtheorem{proposition}{Proposition}
\newtheorem{definition}{Definition}
\newtheorem{assumption}{Assumption}

\usepackage[capitalize,noabbrev]{cleveref}

\usepackage[textsize=tiny]{todonotes}

\icmltitlerunning{Mean-Field Control on Sparse Graphs: From Local Limits to GNNs via Neighborhood Distributions}

\begin{document}

\twocolumn[
  \icmltitle{Mean-Field Control on Sparse Graphs: From Local Limits to GNNs via Neighborhood Distributions}



    \icmlsetsymbol{equal}{*}
    
    \begin{icmlauthorlist}
    \icmlauthor{Tobias Schmidt}{x}
    \icmlauthor{Kai Cui}{y}
    \end{icmlauthorlist}
    
    \icmlaffiliation{x}{Department of Mathematics, TU Darmstadt, Darmstadt, Germany}
    \icmlaffiliation{y}{Department of Electrical Engineering, TU Darmstadt, Darmstadt, Germany}
    
    \icmlcorrespondingauthor{Tobias Schmidt}{tobias.schmidt@tu-darmstadt.de}
    \icmlcorrespondingauthor{Kai Cui}{kai.cui@tu-darmstadt.de}
    
    \icmlkeywords{Mean Field Control, Graph Neural Networks, Sparse Graphs, Local Weak Limits, Actor Critic Algorithms}
    
  \vskip 0.3in
]



\printAffiliationsAndNotice{}  

\begin{abstract}
Mean-field control (MFC) offers a scalable solution to the curse of dimensionality in multi-agent systems but traditionally hinges on the restrictive assumption of exchangeability via dense, all-to-all interactions. In this work, we bridge the gap to real-world network structures by proposing a rigorous framework for MFC on large sparse graphs. We redefine the system state as a probability measure over \textit{decorated rooted neighborhoods}, effectively capturing local heterogeneity. Our central contribution is a theoretical foundation for scalable reinforcement learning in this setting. We prove \textit{horizon-dependent locality}: for finite-horizon problems, an agent's optimal policy at time $t$ depends strictly on its $(T-t)$-hop neighborhood. This result renders the infinite-dimensional control problem tractable and underpins a novel Dynamic Programming Principle (DPP) on the lifted space of neighborhood distributions. Furthermore, we formally and experimentally justify the use of Graph Neural Networks (GNNs) for actor-critic algorithms in this context. Our framework naturally recovers classical MFC as a degenerate case while enabling efficient, theoretically grounded control on complex sparse topologies.
\end{abstract}

\section{Introduction}
The theory of mean-field games and control \cite{andersson2011maximum, bensoussan2013mean} has emerged as a cornerstone for the study of large-scale multi-agent systems \cite{lasry2007mean, huang2006large, carmona2018probabilistic, lauriere2022learning}. By considering the limit as the number of agents $N \to \infty$, these frameworks circumvent the curse of dimensionality inherent in multi-agent reinforcement learning (MARL) problems \citep{canese2021multi, gronauer2022multi}. The core assumption is that agents are exchangeable and interact with the collective through a tractable aggregate statistic -- the empirical distribution of states, or "mean field" (MF).

A critical implicit assumption is that the interaction graph is the complete graph. While all-to-all interaction is reasonable for systems with long-range, homogeneous interactions, it fails to capture a vast array of real-world systems with local interactions. Examples include social networks \cite{jackson2008social}, power grids \cite{dorfler2013synchronization}, biological swarms \cite{vicsek1995novel}, and robotic networks \cite{bullo2009distributed}.
The challenge of sparse graphs is profound: The local environment of each agent is distinct and stochastic, breaking symmetry that underpins classical MF theory. Recent work has begun to tackle this challenge by studying MF learning on random graphs \cite{caines2021graphon, cui2022learning, fabian2023learning, zhang2024learning, fabian2024learning,  zhang2024learning2, fabian2025learning}, showing that a MF limit can still be obtained. However, the limiting MF is no longer a simple state distribution but is related to the structure of the graph itself.

This paper builds on these insights to construct a rigorous framework for reinforcement learning (RL) on sparse graphs. The central idea is to redefine the state of the system. Instead of a measure $\mu \in \mathcal{P}(\mathcal{X})$ on the agent's state space $\mathcal{X}$, the state is a measure $\boldsymbol{\mu} \in \mathcal{P}(\mathcal{G}_*^{\mathcal{X}})$ on the space of \textit{decorated rooted graphs}. This object, $\boldsymbol{\mu}$, describes the probability of observing any given local neighborhood structure, with each node in that neighborhood "decorated" by its state. For a finite $T$-horizon problem, we obtain and apply a crucial locality result: an agent's optimal policy at time $t$ depends only on the distribution of its $(T-t)$-hop neighborhood, providing a direct link between the problem horizon and the required locality of information. As a result, we have a Markov decision process (MDP) as a more tractable single-agent RL problem, and are able to design appropriate RL algorithms using GNNs with strong theoretical justification.

\subsection{Related Work}
A large body of work studies how to compute equilibria and optimal controls in MFG/MFC using RL and deep learning \cite{perrin2020fictitious, gu2021mean, lauriere2022learning}.
In the finite-state/discrete-time setting, foundational RL approaches for MFGs include Q-learning-style algorithms that couple best responses with the MF fixed point \cite{guo2019learning}, and scalable deep RL variants that target large state spaces \cite{lauriere2022scalable}.
Complementary learning paradigms include fictitious play and extensions (e.g., common noise), with convergence analyses and practical implementations \cite{perrin2020fictitious, lauriere2022scalable} and generalization viewpoints \cite{perrin2021generalization}.
For mean-field \emph{control} (cooperative) rather than games, various policy-gradient methods have also been developed \cite{gu2020q, gu2023dynamic, cui2024learning2, guan2024zero}.

A second related strand addresses \emph{many-agent RL} by replacing interactions with a mean-field (average-neighbor) surrogate, yielding mean-field Q-learning / actor--critic algorithms with strong empirical performance in large populations \cite{yang2018mean,gu2021mean}.
These methods typically assume that the relevant aggregate seen by an agent is an empirical average (often over neighbors), but they do not build a \emph{graph-local} limiting distribution over rooted neighborhoods, and use a separate set of assumptions.

\noindent\textbf{Large graph theory and mean fields.}
To incorporate network heterogeneity, several works adopt \emph{dense} graph limits via graphons, leading to graphon MFG/MFC and learning algorithms when interactions vary by latent graphon index \cite{cui2022learning,hu2023graphon,zhang2024learning,zhang2024learning2}.
More recently, various graphon variants have been proposed to better capture sparsity (e.g.\ $L^p$-graphons) beyond dense graphons \cite{fabian2023learning, fabian2024learning}.

On the probability/theory side, interacting processes on sparse graphs have a limiting description governed by local weak (Benjamini--Schramm) convergence, where the natural object is the law of a rooted neighborhood process; continuity of the dynamics with respect to local weak convergence is established in general frameworks \cite{lacker2023local} and successfully to MFC via heuristic algorithms \cite{fabian2025learning}.
Our contribution differs in emphasis and structure: we turn the local-weak-limit viewpoint into a \emph{dynamic programming} framework for control by taking the system state to be a distribution over decorated rooted neighborhoods, proving a \emph{horizon-dependent locality} principle that identifies the minimal radius required at each time step, and theoretically justifies the use of Graph Neural Networks as local policy approximators with approximate optimality. Our results show the limits of the heuristic but successful approach in \cite{fabian2025learning}, contributing a rigorous solution and theoretical basis for algorithms.

\noindent\textbf{MARL on graphs}
Finally, our algorithmic blueprint connects to extensive MARL literature on graph-structured environments and GNN-based policies/value functions \cite{Jiang2020Graph,sunehag2018value,rashid2020monotonic}, but these works typically proceed from a finite-$N$ training objective rather than from an explicit sparse-graph mean-field limit. Our work provides separate motivation for architecture choices such as GNN especially in large sparse graphs where neighborhoods are of interest up to isomorphism.

In contrast to classical MFC and learning, where the MF is a distribution over agent states and interaction structure is implicit (typically all-to-all), our framework replaces the mean field by a distribution over decorated rooted neighborhoods arising from the local weak limit of sparse graphs as shown in Fig.~\ref{fig:overview1}. This shift preserves classical MFC as a special case while enabling a dynamic programming formulation that remains valid when interactions are local and the population is no longer exchangeable.

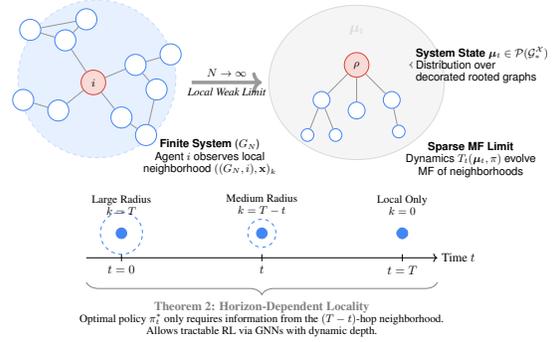
\begin{figure}[t]
  \centering
  \resizebox{0.88\linewidth}{!}{
    \begin{tikzpicture}[
        node distance=1.5cm,
        agent/.style={circle, draw=nodeblue, fill=white, thick, minimum size=0.6cm, inner sep=0pt, font=\small},
        root/.style={circle, draw=rootred, fill=rootred!20, thick, minimum size=0.7cm, inner sep=0pt, font=\bfseries\small},
        edge/.style={draw=linkgray, thick},
        arrow/.style={->, >=Stealth, thick, shorten >=2pt, shorten <=2pt},
        labeltext/.style={font=\sffamily\small, align=center},
        box/.style={rounded corners, draw=gray!50, fill=white, dashed}
    ]
    
        
        \begin{scope}[local bounding box=finiteGraph]
            \node[root] (r) at (0,0) {$i$};
            
            \node[agent] (n1) at (1.2, 0.5) {};
            \node[agent] (n2) at (-0.8, 1.2) {};
            \node[agent] (n3) at (-1.0, -0.8) {};
            \node[agent] (n4) at (0.8, -1.0) {};
            
            \node[agent] (nn1) at (2.2, 0.8) {};
            \node[agent] (nn2) at (1.8, -0.2) {};
            \node[agent] (nn3) at (-1.8, 1.5) {};
            \node[agent] (nn4) at (-0.2, 1.8) {};
            \node[agent] (nn5) at (-2.0, -0.5) {};
            \node[agent] (nn6) at (1.5, -1.5) {};
            
            \draw[edge] (r) -- (n1);
            \draw[edge] (r) -- (n2);
            \draw[edge] (r) -- (n3);
            \draw[edge] (r) -- (n4);
            
            \draw[edge] (n1) -- (nn1);
            \draw[edge] (n1) -- (nn2);
            \draw[edge] (n2) -- (nn3);
            \draw[edge] (n2) -- (nn4);
            \draw[edge] (n3) -- (nn5);
            \draw[edge] (n4) -- (nn6);
            
            \draw[edge] (nn2) -- (nn6);
            \draw[edge] (nn3) -- (nn4);
        \end{scope}
    
        \begin{scope}[on background layer]
            \node[circle, fill=highlightbg, draw=nodeblue!30, dashed, fit=(r) (n1) (n2) (n3) (n4), inner sep=5pt] (neighborhood) {};
        \end{scope}
    
        \node[below right=-0.3cm and -1.2cm of finiteGraph, labeltext] (lbl1) {
            \textbf{Finite System} ($G_N$)\\
            Agent $i$ observes local\\neighborhood $((G_N, i), \mathbf{x})_k$
        };
    
        
        \node[right=1.5cm of r, font=\large] (arrow_start) {};
        \draw[->, line width=1mm, draw=gray!80] (2.8,0) -- (4.8,0) node[midway, above, font=\bfseries\small] {$N \to \infty$} node[midway, below, font=\itshape\footnotesize] {Local Weak Limit};
    
        
        \begin{scope}[shift={(7.5,0)}, local bounding box=limitSpace]
            
            \draw[fill=cloudbg, draw=gray!30] (0,0) ellipse (2.5cm and 2.2cm);
            \node[font=\bfseries\large, gray!30] at (0,1.5) {$\boldsymbol{\mu}_t$};
            
            \node[root] (lim_r) at (0,0.5) {$\rho$};
            \node[agent, scale=0.8] (lim_n1) at (-1, -0.5) {};
            \node[agent, scale=0.8] (lim_n2) at (0, -0.8) {};
            \node[agent, scale=0.8] (lim_n3) at (1, -0.5) {};
            
            \node[agent, scale=0.6] (lim_nn1) at (-1.4, -1.5) {};
            \node[agent, scale=0.6] (lim_nn2) at (-0.6, -1.5) {};
            \node[agent, scale=0.6] (lim_nn3) at (1.2, -1.4) {};
    
            \draw[edge] (lim_r) -- (lim_n1);
            \draw[edge] (lim_r) -- (lim_n2);
            \draw[edge] (lim_r) -- (lim_n3);
            \draw[edge] (lim_n1) -- (lim_nn1);
            \draw[edge] (lim_n1) -- (lim_nn2);
            \draw[edge] (lim_n3) -- (lim_nn3);
            
            \node[right=1.2cm of lim_r, labeltext, align=left, anchor=west] (state_def) {
                \textbf{System State} $\boldsymbol{\mu}_t \in \mathcal{P}(\mathcal{G}_*^{\mathcal{X}})$\\
                Distribution over\\decorated rooted graphs
            };
            \draw[->, gray, dashed] (state_def.west) -- (1.5, 0.5);
            
        \end{scope}
    
        \node[below right=-0.63cm and -4.3cm of limitSpace, labeltext] (lbl2) {
            \textbf{Sparse MF Limit}\\
            Dynamics $T_t(\boldsymbol{\mu}_t, \pi)$ evolve\\MF of neighborhoods
        };
    
        
        \begin{scope}[shift={(0.8,-0.5)}]
            
        \node[below=2.5cm of r, anchor=center] (timeline_start) {};
        
        \draw[->, thick] (-1, -4.5) -- (9, -4.5) node[right] {Time $t$};
        
        \draw[thick] (0, -4.4) -- (0, -4.6) node[below] {$t=0$};
        \node[agent, scale=0.5, fill=nodeblue] at (0, -3.8) {};
        \draw[dashed, nodeblue] (0, -3.8) circle (0.6cm);
        \node[above, font=\small, align=center] at (0, -3.4) {Large Radius\\$k=T$};
    
        \draw[thick] (4, -4.4) -- (4, -4.6) node[below] {$t$};
        \node[agent, scale=0.5, fill=nodeblue] at (4, -3.8) {};
        \draw[dashed, nodeblue] (4, -3.8) circle (0.4cm);
        \node[above, font=\small, align=center] at (4, -3.4) {Medium Radius\\$k=T-t$};
    
        \draw[thick] (8, -4.4) -- (8, -4.6) node[below] {$t=T$};
        \node[agent, scale=0.5, fill=nodeblue] at (8, -3.8) {};
        \draw[dashed, nodeblue] (8, -3.8) circle (0.15cm);
        \node[above, font=\small, align=center] at (8, -3.4) {Local Only\\$k=0$};
    
        \draw[decorate, decoration={brace, amplitude=10pt, mirror}, thick, gray] (-1, -5.2) -- (9, -5.2) node[midway, below=12pt, font=\bfseries] {Theorem 2: Horizon-Dependent Locality};
        \node[below=22pt, align=center, font=\small] at (4, -5.2) {Optimal policy $\pi_t^*$ only requires information from the $(T-t)$-hop neighborhood.\\Allows tractable RL via GNNs with dynamic depth.};

        \end{scope}
    
    \end{tikzpicture}
  }
  \caption{\textbf{Overview of the Sparse Mean-Field Control Framework.} (Top Left) The physical $N$-agent system on a sparse graph where agents interact locally. (Top Right) The rigorous mean-field limit where the state is redefined as a probability measure $\boldsymbol{\mu}_t$ over the space of decorated rooted graphs. (Bottom) Our Horizon-Dependent Locality result (Theorem \ref{thm:locality}) establishes the $(T-t)$-hop neighborhood at time $t$ as sufficient statistic for optimal control.}
  \label{fig:overview1}
\end{figure}

\subsection{Core Contributions}
Our main contributions are: (i) A novel mathematical framework with optimality characterized to horizon-dependent locality, using the distribution over decorated neighborhoods and deriving the corresponding state transition dynamics; (ii) a DPP and MDP, establishing a Bellman equation and proving the existence of an optimal local policy under standard continuity and compactness; (iii) an algorithmic blueprint where we propose practical RL algorithms on sparse large graphs, that are theoretically justified as the RL solution to a well-defined MDP; and (iv) theoretical guarantees that show approximate optimality of mean-field solutions through a "propagation of chaos" type result, a policy gradient theorem for sparse graphs, as well as a rigorous justification for GNNs in RL on large sparse graphs. This work rigorously bridges the gap between classical mean-field learning algorithms and the reality of highly sparse systems, providing both a rigorous foundation and a practical pathway for designing decentralized control strategies on large networks.

\section{Background and Preliminaries}
In this section, we expand upon classical MFC and our proposed framework. We begin by explaining how classical MFC and our approach fit together.

\subsection{From Classical MFC to Graph-Lifted MFC}

In a standard $N$-agent MFC problem, each agent $i \in \{1, \dots, N\}$ has a state $x_t^i \in \mathcal{X}$ and chooses an action $u_t^i \in \mathcal{U}$. The dynamics and reward for agent $i$ depend on its own state and action, and on the empirical distribution of all agent states, $\mu_t^N = \frac{1}{N}\sum_{j=1}^N \delta_{x_t^j}$. In the limit $N \to \infty$, the control problem simplifies to solving a Bellman equation on the space $\mathcal{P}(\mathcal{X})$ for the value function $V(t, \mu)$; see also \cite{gu2023dynamic} for a more thorough introduction. This formulation relies on the all-to-all interaction structure, which is a strong symmetry assumption: agents are exchangeable and interact only through the empirical distribution of states $\mu_t \in \mathcal P(X)$. While this assumption is natural when interactions are dense or effectively all-to-all, it is unreasonable in systems where interactions are mediated by a sparse graph.

From a modeling perspective, the failure is structural rather than technical. In sparse networks, an agent does not interact with a representative sample of the population, but with a small, random neighborhood whose composition varies across agents. As a result, the empirical state distribution $\mu_t$ is no longer a closed sufficient statistic for the dynamics or rewards, even in the limit $N \to \infty$.

Our framework can be understood as a minimal extension of classical MFC by lifting the mean field from states to \emph{local environments}. Concretely, instead of tracking only the marginal distribution of agent states, we consider the distribution of \emph{decorated rooted neighborhoods},
\[
\boldsymbol\mu_t \;\in\; \mathcal P(\mathcal G^\mathcal X_\ast),
\]
where each element describes both the local interaction structure and the states of agents within that neighborhood.
This lift has three key consequences.

\noindent\textbf{(i) Closure under Sparse Dynamics.}
Local transition kernels and rewards depend on bounded neighborhoods by assumption. By taking the system state to be a distribution over such neighborhoods, the limiting dynamics become deterministic/Markovian at the level of $\boldsymbol\mu_t$, exactly as in classical MFC, but now without requiring dense interactions.

\noindent\textbf{(ii) Recovery of Classical MFC as a Special Case.}
When the interaction graph is complete, or when rewards and dynamics depend only on the root state, the neighborhood distribution $\boldsymbol \mu_t$ collapses to the marginal state distribution in $\mathcal P(\mathcal X)$. In this sense, classical mean-field control is a degenerate case of the graph-lifted formulation, corresponding to zero-radius neighborhoods.

\noindent\textbf{(iii) Compatibility with Dynamic Programming.}
Although $\mathcal P(\mathcal G^\mathcal X_\ast)$ is formally infinite-dimensional, for finite-horizon problems the relevant information is localized. As shown in Section~\ref{sec:rl}, the optimal policy at time $t$ depends only on $(T-t)$-hop neighborhoods, yielding a tractable DPP on progressively smaller lifted state spaces.

Viewed this way, graph-lifted MFC as we propose it in this work does not replace classical MFC but extends it; the mean field is enriched just enough to encode the interaction structure required while preserving the conceptual simplicity of a single representative control problem.

\subsection{Sparse Random Graphs and Local Weak Convergence}
Our work focuses on systems where agents are nodes in a large sparse graph, contrasting against simpler dense graphs where a MF may be computed for neighbors of a node. A canonical example is the \Erdos-\Renyi\ random graph $G(N, d/N)$. As $N \to \infty$, such graphs are \textit{locally tree-like}, formalized by the concept of \textbf{local weak convergence} or Benjamini-Schramm convergence \cite{benjamini2001recurrence}. By a $k$-neighborhood of a node $\rho$ in $\mathcal G$ we mean the set
$\left\{ v \in \mathcal G: \Vert v - \rho \Vert \le k \right\}$,
where the norm is defined through the graph distance. We will also call a $k$-neighborhood a $k$-hop neighborhood occasionally and denote it by $(\mathcal G, \rho)_k$. 
\begin{definition}
    Let $\mathcal{G}_*$ be the space of rooted graphs. A sequence of finite, random graphs $(G_N)_{N \ge 1}$ converges to a random rooted graph $(\mathcal{G}, \rho)$ if for every fixed radius $k \ge 1$, the distribution of the $k$-hop neighborhood of a uniformly chosen vertex in $G_N$ converges to the distribution of the $k$-hop neighborhood of the root $\rho$ in $\mathcal{G}$. 
\end{definition}
This implies that for a sequence of large sparse graphs $(G_N)_{N \in \mathbb N}$ that converges weakly locally to $(\mathcal G, \rho)$, agents typically experience a \textit{distribution} of local environments.

\section{The Sparse Mean-Field Control Framework}
\label{sec:sparse_framework}
We now formalize the control problem on a large sparse graph over a finite time horizon $T$. In all what follows we write $\mathcal{P}(\Omega)$ to denote the space of probability measures over the set $\Omega$.
We also require the notion of a decorated graph. Throughout, all spaces considered will be assumed to be (at least) standard Borel spaces.
\begin{definition}[Decorated Rooted Graph]
A decorated rooted graph is a pair $((\mathcal{G}, \rho), \mathbf{x})$, where $(\mathcal{G}, \rho)$ is a rooted graph and $\mathbf{x}: V(\mathcal{G}) \to \mathcal{X}$ is a state decoration. Let $\mathcal{G}_*^{\mathcal{X}}$ be the space of all such objects. For a radius $k$, let $\mathcal{G}_{*,k}^{\mathcal{X}}$ be the space of decorated $k$-neighborhoods.
\end{definition}

\subsection{The Finite N-Agent System}
Let $G_N = (V_N, E_N)$ be a graph with $|V_N|=N$ agents. Let $((G_N, i), \mathbf{x}_t)_k$ denote the $k$-hop decorated neighborhood of agent $i$ at time $t$. The system is modeled as follows.
\begin{itemize}[noitemsep,topsep=0pt]
    \item \textbf{State}: Each agent $i \in V_N$ has a state $\mathbf x_t^i \in \mathcal{X}$. The joint state is $\mathbf{x}_t = (\mathbf x_t^i)_{i \in V_N} \in \mathcal{X}^N$. The initial states are given as (random) boundary conditions $\mathbf x_0$.
    \item \textbf{Action}: Each agent $i$ takes an action $u_t^i \in \mathcal{U}$.
    \item \textbf{Policy}: Agents follow a shared, local policy $\pi$. The action $u_t^i$ depends on the agent's local decorated neighborhood. For simplicity of exposition, consider nearest-neighbor interactions ($k=1$):
    \begin{equation}
        \label{equ:action_kernel_sample}
        u_t^i \sim \pi_t( \mathrm du_t^i \mid (((G_N, i), \mathbf{x}_t)_1).
    \end{equation}
    More generally, we define a policy as follows.
\begin{definition}[Policy]
A policy $\pi_t$ is a stochastic kernel $\pi_t: \mathcal{G}_*^{\mathcal{X}} \to \mathcal{P}(\mathcal{U})$ that provides an action $u_t \sim \pi_t(\cdot \mid ((\mathcal{G}, \rho), \mathbf{x}))$ for node $\rho$.
\end{definition}
    \item \textbf{Dynamics}: The state of each agent evolves based on its local neighborhood:
    \begin{equation}
        \label{equ:state_kernel_sample}
        \mathbf x_{t+1}^i \sim P( \mathrm d \mathbf x_{t+1}^i \mid ((G_N, i), \mathbf{x}_t)_1, u_t^i).
    \end{equation}
    \item \textbf{Objective}: Maximize the total expected reward:
    \begin{equation}
        \label{equ:finite_agent_rewards}
        J^N(\pi) = \E \left[ \sum_{t=0}^{T-1} \frac{1}{N} \sum_{i=1}^N r(((G_N, i), \mathbf{x}_t)_1, u_t^i) \right]
    \end{equation}
    Moreover, we assume $r: \mathcal G^{\mathcal X}_{*,k} \times \mathcal U \to \mathbb R$ to be continuous and bounded.
\end{itemize}
The above setting describes a fully observable MDP, i.e. a standard RL problem, when we optimize over all policies. This is reflected in the fact that the joint states and actions of agents may be equivalently considered, as the MDP's states and actions. For tractability however, in standard MFC it is fruitful to introduce a ``system state`` given by a probability measure $\mu \in \mathcal P ( \mathcal X)$, as well as to look for feedback policies shared by all agents \cite{saldi2018markov} that are optimal under a certain class of policies. However, such a simple system state is insufficient in the setting of sparse graphs. The above shared-policy setting is the one we will study for the remainder of this paper, developing the suitable sparse mean-field limit for solving the RL problem.

\subsection{The Limiting Sparse-MFC System}
As $N \to \infty$, we assume $G_N$ converges to a random rooted graph $(\mathcal{G}, \rho)$ in the Benjamini--Schramm sense. 
The state of our Sparse-MFC (SMFC) system at time $t$ is therefore naturally given by a probability measure $\boldsymbol{\mu}_t \in \mathcal{P}(\mathcal{G}_*^{\mathcal{X}})$.

By formally taking $N \to \infty$ in \eqref{equ:finite_agent_rewards}, the normalized sum  converges to an integral over the space of decorated graphs, distributed according to some law determined by the Benjamini--Schramm limit. The evolution of the state is governed by a transition operator $\boldsymbol{\mu}_{t+1} = T_t(\boldsymbol{\mu}_t, \pi_t)$. Indeed, it is just the result of an application of the two kernels according to \eqref{equ:action_kernel_sample} and \eqref{equ:state_kernel_sample}. One averages these dynamics according to the current system distribution, to obtain
\begin{align}
    \boldsymbol{\mu}_{t+1}(\mathrm d S) = \int \boldsymbol \mu_t (\mathrm d y) \int \Pi_t(\mathrm d \overline u | y) \overline P(\mathrm d S | y, \overline u).
\end{align}
Here, $\overline u = (u^{v_1},...)$ for some enumeration of $V(\mathcal G)$ and
$$\Pi_t(\mathrm d \overline u | S) = \prod\limits_{v \in \mathcal G} \pi_t ( \mathrm d u^v | (S,v)_1).$$
Similarly, $\overline P$ uses the state and the local action to sample the next state for every agent $v \in V(\mathcal G)$.
Note that the evolution operators $(T_t)_{t \in \mathbb N}$ yield a deterministic flow of probability measures in case the initial state $\boldsymbol{\mu}_0$ and the policy $\pi$ is specified.
We will usually omit explicit dependence on $\pi$ when clear from context.

The objective functional in the limit is given with $S = (\mathcal{G}, \rho)$ being a rooted graph, by
\begin{multline}
    \label{equ:full_limiting_objective}
    J(\pi) = \E \Big[ \sum_{t=0}^{T-1} \int_{\mathcal{G}_*^{\mathcal{X}}} \int_{\mathcal{U}} r((S, \mathbf{x}_t)_1, u_t) \\ \times \pi_t(\mathrm du_t \mid S) \, \boldsymbol{\mu}_t(\mathrm dS) \Big]
\end{multline}
where it is important that the reward only takes local neighborhoods, meaning that taking $N \to \infty$ indeed yields a sensible limit by assumption. 

The value function of a sequence of policies $\pi = (\pi_t)_{t \le T}$ with boundary conditions $\mathbf x_0$ is defined, with $0 < \gamma \le 1$, by
\begin{multline}
    \label{equ:value_fct_def}
    V_t^{\pi}(\boldsymbol \mu) = \mathbb E ^\pi \Big[ \sum\limits_{t'=t} ^{T-1} \gamma ^{t'} \int \boldsymbol{\mu}_{t'}(\mathrm d (\mathcal G,\rho) )  \\
    \times r \big( ((\mathcal G,\rho),\mathrm x_{t'})_1,u_{t'} \big) \Big],
\end{multline}
where the expectation w.r.t. $\pi$ is defined such that
$$\boldsymbol \mu_{t+1} = T_{t+1} (\boldsymbol \mu_t),$$
$$u^i_t \sim \pi_t \big(\mathrm d u^i_t \mid \big((\mathcal G,i), \mathbf x_t \big)_1 \big),$$
$$ \mathbf x_{t+1} ^i \sim P(\mathrm d \mathbf x_{t+1} ^i \mid \big((\mathcal G,i), \mathbf x_t \big)_1,u_t ^i ).$$
Finally, we abbreviate $V(\boldsymbol \mu) = \sup_\pi V^\pi (\boldsymbol \mu)$.
\subsection{Approximation Guarantee}
The practical value of this framework depends on its ability to approximate the finite $N$-agent system, which we obtain with rigorous guarantees in the following.
\begin{theorem}[Convergence of Optimal Values] \label{thm:convergence}
Let $(G_N)_{N \ge 1}$ be a sequence of graphs converging locally to $(\mathcal{G}, \rho)$.
Assume that the space of admissible policies $\Pi$ is compact and that the family $(J^N)_{N\in\mathbb N}$
is equicontinuous (with respect to the topology of weak convergence of policies) on $\Pi$.
Then
\begin{equation}
    \lim_{N \to \infty} \sup_{\pi} J^N(\pi) \;=\; \sup_{\pi} J(\pi).
\end{equation}
\end{theorem}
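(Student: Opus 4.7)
The plan is to prove the theorem in three stages: pointwise convergence $J^N(\pi)\to J(\pi)$ for each fixed $\pi\in\Pi$, then upgrade to uniform convergence on $\Pi$ via the equicontinuity hypothesis, and finally deduce convergence of the suprema. The pointwise step rests on the locality of the dynamics, policy and reward, which lets us push the Benjamini--Schramm convergence $G_N\to(\mathcal G,\rho)$ through the forward flow of the controlled system; the uniform step is then a standard argument for equicontinuous families on a compact space.

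For the pointwise step I would fix $\pi\in\Pi$ and argue by induction on $t$ that the empirical distribution of decorated neighborhoods
\[
\boldsymbol\mu_t^N \;=\; \frac{1}{N}\sum_{i=1}^N \delta_{((G_N,i),\mathbf x_t)}
\]
converges weakly (in law) to the limiting $\boldsymbol\mu_t$ defined by the recursion $\boldsymbol\mu_{t+1}=T_t(\boldsymbol\mu_t,\pi_t)$, viewed on each $k$-ball. The base case $t=0$ is exactly the Benjamini--Schramm hypothesis, applied to the initial decoration $\mathbf x_0$. For the inductive step, observe that the $k$-hop neighborhood at time $t+1$ of vertex $i$ is a measurable function of the $(k+1)$-hop neighborhood at time $t$ together with the local exogenous randomness in the action kernel \eqref{equ:action_kernel_sample} and state kernel \eqref{equ:state_kernel_sample}. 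Hence, invoking the continuity of $\pi_t$ and $P$ as stochastic kernels on rooted neighborhoods, together with the continuous mapping theorem for local weak convergence (as in the framework of \cite{lacker2023local}), the push-forward of $\boldsymbol\mu_t^N$ under one step of the controlled dynamics converges in law to the push-forward of $\boldsymbol\mu_t$, which is $\boldsymbol\mu_{t+1}$. Since $r$ is bounded and continuous on the space of decorated $1$-neighborhoods paired with actions, an application of the bounded convergence theorem then yields
\[
\frac{1}{N}\sum_{i=1}^N \E\bigl[r(((G_N,i),\mathbf x_t)_1,u_t^i)\bigr] \;\longrightarrow\; \int\!\!\int r((S,\mathbf x_t)_1,u)\,\pi_t(\mathrm du\mid S)\,\boldsymbol\mu_t(\mathrm dS),
\]
and summing over the finite horizon gives $J^N(\pi)\to J(\pi)$.

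For the upgrade to uniform convergence, let $f_N(\pi)=J^N(\pi)$ and $f(\pi)=J(\pi)$. By the pointwise step, $f_N\to f$ on the compact set $\Pi$. By hypothesis, $(f_N)$ is equicontinuous; a standard Arzel\`a--Ascoli type argument (or direct $\varepsilon/3$ decomposition using a finite cover of $\Pi$) shows that equicontinuity plus pointwise convergence on a compact set implies uniform convergence. Uniform convergence of $f_N$ to $f$ on $\Pi$ immediately yields $\sup_\pi f_N(\pi)\to \sup_\pi f(\pi)$, which is the claim. Compactness also guarantees the suprema are attained, so there is no measurability issue in forming them.

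The main obstacle is the rigorous propagation of local weak convergence through a full step of controlled dynamics, i.e.\ the inductive step above. The subtlety is that one must work with joint distributions of entire random rooted neighborhoods (not just finite moments), and ensure that the randomization performed by $\pi_t$ and $P$ commutes with taking local weak limits. I would address this by fixing a radius $k$, expressing the time-$(t+1)$ decorated $k$-ball as a measurable functional of a time-$t$ decorated $(k+1)$-ball plus an i.i.d.\ family of $\mathrm{Uniform}[0,1]$ seeds per vertex, and invoking continuity of this functional together with a Skorokhod representation for the Benjamini--Schramm convergence; continuity of $\pi_t$ and $P$ in their neighborhood argument is the regularity assumption that makes this work, and is implicit in the standing assumption that the admissible policies $\Pi$ are neighborhood-continuous kernels.
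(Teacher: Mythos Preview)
Your proposal is correct and follows essentially the same two-stage strategy as the paper: first pointwise convergence $J^N(\pi)\to J(\pi)$ via locality of the dynamics and Benjamini--Schramm convergence (the paper outsources this step to Theorem~\ref{thm:locality} and \cite[Theorem~3.2]{lacker2023local} rather than sketching the induction you outline), then passage to suprema via compactness and equicontinuity. The only cosmetic difference is that you upgrade pointwise to uniform convergence and read off $\sup J^N\to\sup J$, whereas the paper runs a direct $\liminf/\limsup$ argument using maximizers $\pi_N\in\arg\max J^N$ together with a convergent subsequence; both are standard routes from the same hypotheses.
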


\begin{proof}
Fix an admissible policy $\pi$. By Theorem~\ref{thm:locality} and \citet[Theorem~3.2]{lacker2023local},
we have $J^N(\pi)\to J(\pi)$ as $N\to\infty$.
Next, let $V_N^*=\sup_{\pi}J^N(\pi)$ and $V^*=\sup_{\pi}J(\pi)$.
By assumption, the maps $\pi\mapsto J^N(\pi)$ and $\pi\mapsto J(\pi)$
are continuous; hence maximizers exist on the compact set of admissible policies.
Pick $\pi^*\in\arg\max_\pi J$ and $\pi_N\in\arg\max_\pi J^N$.
Then $V_N^*\ge J^N(\pi^*)\to J(\pi^*)=V^*$, so $\liminf_N V_N^*\ge V^*$.
By compactness, along a subsequence $\pi_N\Rightarrow\bar\pi$ and by continuity,
$V_N^*=J^N(\pi_N)\to J(\bar\pi)\le V^*$, hence $\limsup_N V_N^*\le V^*$.
Therefore $V_N^*\to V^*$.
\end{proof}

\begin{corollary}[Approximate Optimality]
An optimal policy $\pi^*$ for the SMFC problem is $\epsilon$-optimal for the $N$-agent problem, with $\epsilon \to 0$ for sufficiently large $N$.
\end{corollary}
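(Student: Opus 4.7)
The plan is to derive the corollary as a direct consequence of Theorem~\ref{thm:convergence} together with the pointwise convergence $J^N(\pi)\to J(\pi)$ that was already established in its proof. Conceptually, once we know both the value of a fixed SMFC-optimal policy and the optimal finite-$N$ value converge to the same limit $V^{*}=\sup_\pi J(\pi)$, $\varepsilon$-optimality follows by a triangle inequality.

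First, I would use compactness of $\Pi$ and continuity of $\pi\mapsto J(\pi)$ (the same hypotheses used in Theorem~\ref{thm:convergence}) to fix a maximizer $\pi^{*}\in\arg\max_{\pi}J(\pi)$, so that $J(\pi^{*})=V^{*}$. Since $\pi^{*}$ is a stochastic kernel on $\mathcal{G}_{*}^{\mathcal{X}}$ and the decorated neighborhoods arising in any finite graph $G_N$ lie in $\mathcal{G}_{*}^{\mathcal{X}}$, the same $\pi^{*}$ is automatically admissible in every $N$-agent problem, so $J^N(\pi^{*})$ is well-defined.

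Second, I would combine the two ingredients already on hand. From the proof of Theorem~\ref{thm:convergence}, $J^N(\pi^{*})\to J(\pi^{*})=V^{*}$, and from the theorem itself, $V_N^{*}:=\sup_\pi J^N(\pi)\to V^{*}$. Given $\varepsilon>0$, choose $N_0$ such that $|V_N^{*}-V^{*}|<\varepsilon/2$ and $|J^N(\pi^{*})-V^{*}|<\varepsilon/2$ for all $N\ge N_0$. Then
\[
0\;\le\;V_N^{*}-J^N(\pi^{*})\;\le\;|V_N^{*}-V^{*}|+|V^{*}-J^N(\pi^{*})|\;<\;\varepsilon,
\]
which is precisely the statement that $\pi^{*}$ is $\varepsilon$-optimal for the $N$-agent problem; sending $\varepsilon\downarrow 0$ gives the claim.

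The only genuine subtlety is measurability and admissibility of the limiting policy when applied on finite graphs, but this is immediate from the way policies were defined as kernels on $\mathcal{G}_{*}^{\mathcal{X}}$. The main obstacle, if one were to strengthen the statement into a \emph{quantitative} approximation guarantee, would be to extract an explicit rate for $\varepsilon$ in terms of (i) the rate of Benjamini--Schramm convergence of $G_N$ to $(\mathcal{G},\rho)$, and (ii) a uniform modulus of continuity for the equicontinuous family $(J^N)_{N}$ on $\Pi$; this would likely require quantitative versions of the locality result (Theorem~\ref{thm:locality}) and of \citet[Theorem~3.2]{lacker2023local}, neither of which is needed for the qualitative $\varepsilon\to 0$ claim asserted in the corollary.
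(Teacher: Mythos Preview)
Your proposal is correct and is precisely the argument the paper leaves implicit: the corollary is stated without proof immediately after Theorem~\ref{thm:convergence}, and your derivation---combining pointwise convergence $J^N(\pi^{*})\to J(\pi^{*})$ with $V_N^{*}\to V^{*}$ via the triangle inequality---is exactly the intended one-line justification.
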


\section{Reinforcement Learning on Sparse Graphs}
\label{sec:rl}
In what follows, we will develop an SMFC-based RL algorithm applicable to sparse graphs, that will naturally fit with GNNs, and is theoretically well-founded through solving a limiting MDP, whose foundation is proved in the following.

\subsection{Horizon-Dependent Local Dynamic Programming}
To begin, a key insight is that for a finite horizon, we do not need to consider the distribution over infinite graphs.

\begin{theorem}[Horizon-Dependent Locality] \label{thm:locality}
Consider the SMFC problem with horizon $T$. The value function $V_t$ depends only on the distribution of $(T-t)$-hop decorated neighborhoods, i.e., $V_t: \mathcal{P}(\mathcal{G}_{*, T-t}^{\mathcal{X}}) \to \mathbb{R}$. In other words, $V_t$ is measurable w.r.t. the information contained in $\mathcal{G}_{*, T-t}^{\mathcal{X}}$. Consequently,
the optimal policy $\pi_t^*$ at time $t$ depends only on the $(T-t)$-hop decorated neighborhood of the root.
\end{theorem}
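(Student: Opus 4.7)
I would argue by backward induction on $t$, from $t=T$ down to $0$, starting from the Bellman equation induced by the definition of $V_t^\pi$. At $t=T$ the sum defining $V_T$ is empty, so $V_T\equiv 0$, which trivially depends on the $0$-hop projection, matching $T-T=0$. For the inductive step, assume $V_{t+1}$ factors through the projection $M_{T-t-1}\colon \mathcal{P}(\mathcal{G}_*^{\mathcal{X}})\to \mathcal{P}(\mathcal{G}_{*,T-t-1}^{\mathcal{X}})$ onto $(T-t-1)$-hop decorated neighborhoods. Writing the Bellman recursion as
\begin{equation*}
V_t(\boldsymbol{\mu}) \;=\; \sup_{\pi_t}\Big\{ \gamma^{t} R_t(\boldsymbol{\mu},\pi_t) \;+\; V_{t+1}\bigl(T_t(\boldsymbol{\mu},\pi_t)\bigr)\Big\},
\end{equation*}
it suffices to show that, for every admissible $\pi_t$, the right-hand side depends on $\boldsymbol{\mu}$ only through $M_{T-t}(\boldsymbol{\mu})$.

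The immediate-reward term $R_t(\boldsymbol{\mu},\pi_t)$ depends on $\boldsymbol{\mu}$ only through its $1$-hop marginal, since both $r$ and the policy kernel act on $1$-hop views of $\omega$, and $M_1$ is a strict coarsening of $M_{T-t}$. The core step is a propagation-of-locality lemma for the transition operator: $M_{T-t-1}\circ T_t(\cdot,\pi_t)$ factors through $M_{T-t}$. To see it, note that the $(T-t-1)$-hop marginal of $T_t(\boldsymbol{\mu},\pi_t)$ is determined by the joint law at time $t+1$ of the decorations $\mathbf{x}_{t+1}^v$ for all $v$ inside the $(T-t-1)$-ball of the root; each $\mathbf{x}_{t+1}^v$ is generated independently by first drawing $u_v\sim \pi_t(\cdot\mid (\mathcal{G},v,\mathbf{x}_t)_1)$ and then $\mathbf{x}_{t+1}^v\sim P(\cdot\mid (\mathcal{G},v,\mathbf{x}_t)_1,u_v)$. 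Since both kernels act on the $1$-hop view of $v$, and for any $v$ at graph distance at most $T-t-1$ from the root this $1$-hop view lies inside the root's $(T-t)$-neighborhood, the entire sampling is determined by $(\mathcal{G},\rho,\mathbf{x}_t)_{T-t}$. Hence the $(T-t-1)$-hop marginal of $T_t(\boldsymbol{\mu},\pi_t)$ is a function of $M_{T-t}(\boldsymbol{\mu})$. Applying $V_{t+1}$ and then taking $\sup_{\pi_t}$ preserves this factorization and closes the induction.

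For the policy claim I would run a parallel backward induction on an auxiliary local $Q$-value $Q_t(\omega,u)$, defined as the supremum over future policies of the root's expected discounted rewards from time $t$ on when the root's view at time $t$ is $\omega$ and its action is $u$. The same one-hop-per-step accounting shows that $Q_t(\omega,u)$ depends on $\omega$ only through $\omega_{T-t}$: there are $T-t$ remaining time steps, each of which grows the relevant radius by at most one, and the final reward term is $1$-hop. The pointwise Bellman optimality condition at each $\omega$ is then the optimization of a function of $\omega_{T-t}$, and under the continuity and compactness hypotheses on $r$ and $\mathcal{U}$ stated earlier, a measurable selection theorem (e.g.\ Kuratowski--Ryll-Nardzewski) provides an optimal kernel $\pi_t^*$ whose dependence on its input is through the $(T-t)$-hop truncation only.

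The main obstacle is the clean formalization of the propagation-of-locality lemma. The intuition is natural---both $P$ and the application of $\pi_t$ act at radius one, so a single backward step grows the required radius by at most one---but a rigorous argument requires disintegrating $\boldsymbol{\mu}$ along the random $(T-t)$-neighborhood of the root, expressing $T_t(\boldsymbol{\mu},\pi_t)$ as an iterated pushforward of conditionally independent one-hop kernels indexed by the vertices in that neighborhood, and verifying measurability of all projections so that $M_{T-t-1}\circ T_t$ genuinely collapses to a function of $M_{T-t}(\boldsymbol{\mu})$. A secondary subtlety is ensuring existence of a truly optimal, rather than merely $\varepsilon$-optimal, $(T-t)$-local $\pi_t^*$; this is precisely what the continuity/compactness assumptions (via the measurable selection step) deliver.
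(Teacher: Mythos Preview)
Your approach is correct but proceeds along a different axis than the paper. You run a \emph{backward} induction on $t$ through the Bellman recursion at the level of measures, with the key step being the propagation-of-locality lemma that $M_{T-t-1}\circ T_t(\cdot,\pi_t)$ factors through $M_{T-t}$. The paper instead works at the \emph{trajectory} level: it first represents every kernel as a deterministic function of its input and an auxiliary uniform variable (Proposition~\ref{prop:kernel-randomization}), introduces $\sigma$-algebras $\mathcal{F}_k^\rho$ generated by the initial decorations and auxiliary randomness within radius $k$, and proves by \emph{forward} induction (Lemma~\ref{lemma:action_is_local}) that $\mathbf{x}_t^i$ is $\mathcal{F}_t^i$-measurable; the locality of $V_t$ then follows because each $1$-hop view $((\mathcal{G},\rho),\mathbf{x}_s)_1$ for $s\ge t$, and hence the reward-to-go, is measurable with respect to $\mathcal{K}_{T-t}^\rho$. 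Your measure-level argument dovetails directly with the DPP---your lemma is precisely the statement that $T_t$ descends to a map $\mathcal{P}(\mathcal{G}_{*,T-t}^{\mathcal{X}})\to\mathcal{P}(\mathcal{G}_{*,T-t-1}^{\mathcal{X}})$, which is needed anyway for Theorem~\ref{thm:dpp}---and avoids the kernel-randomization detour entirely; the paper's $\sigma$-algebra route, on the other hand, yields a slightly stronger pathwise statement and handles the random graph topology uniformly via the augmentation $\mathcal{K}_k^\rho=\sigma(\mathcal{F}_k^\rho,\mathcal{J})$ without needing to disintegrate $\boldsymbol{\mu}$. One caution: since the paper states the DPP (Theorem~\ref{thm:dpp}) \emph{after} Theorem~\ref{thm:locality} and already on the truncated domain, you should make explicit that you first establish the Bellman recursion on the full space $\mathcal{P}(\mathcal{G}_*^{\mathcal{X}})$ directly from the definition of $V_t^\pi$, so that no circularity arises.
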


\begin{remark}
Theorem \ref{thm:locality} is crucial for computation. It implies that for a $T$-horizon problem, the state space is not $\mathcal{P}(\mathcal{G}_*^{\mathcal{X}})$ but $\mathcal{P}(\mathcal{G}_{*,T}^{\mathcal{X}})$. On a regular graph like a grid or a $d$-regular tree, the number of non-isomorphic $k$-hop neighborhoods is finite. This means the state space $\mathcal{P}(\mathcal{G}_{*,k}^{\mathcal{X}})$ becomes computationally tractable. 
\end{remark}

This locality result leads to a well-defined DPP/MDP on a sequence of progressively smaller state spaces.

\begin{theorem}[Dynamic Programming Principle] \label{thm:dpp}
The value function $V_t: \mathcal{P}(\mathcal{G}_{*, T-t}^{\mathcal{X}}) \to \mathbb{R}$ satisfies the Bellman equation:
\begin{equation}
\label{equ:dpp}
\begin{cases}
    V_T(\cdot) = 0, \\[0.3em]
    \begin{alignedat}{2}
        V_t(\boldsymbol{\mu})
        &\,=\, \sup_{\pi_t} \Bigl\{
        &&\mathbb{E}_{S \sim \boldsymbol{\mu}}
          \bigl[
            \mathbb{E}_{u \sim \pi_t(\cdot \mid S)}[r]
          \bigr] \\
        &&&+ V_{t+1}\bigl(T_t(\boldsymbol{\mu}, \pi_t)\bigr)
        \Bigr\},
        \qquad t < T .
    \end{alignedat}
\end{cases}
\end{equation}

where $\boldsymbol{\mu} \in \mathcal{P}(\mathcal{G}_{*, T-t}^{\mathcal{X}})$ and $T_t$ maps to a distribution over $(T-t-1)$-hop neighborhoods. An optimal policy $\pi_t^*$ exists.
\end{theorem}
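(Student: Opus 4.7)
The plan is a standard backward induction on the Bellman equation, leveraging two earlier ingredients: the flow property $\boldsymbol\mu_{t+1}=T_t(\boldsymbol\mu_t,\pi_t)$ which renders the lifted dynamics Markovian in $\boldsymbol\mu$, and Theorem~\ref{thm:locality}, which ensures $V_t$ is a function on the smaller space $\mathcal{P}(\mathcal{G}_{*,T-t}^{\mathcal{X}})$ and that $T_t$ maps consistently between these progressively smaller spaces. To obtain the Bellman recursion, I would fix a full policy $\pi=(\pi_t,\dots,\pi_{T-1})$ and split the value functional \eqref{equ:value_fct_def} into its $t$-th term and the tail. By Fubini the $t$-th term equals the immediate reward $\mathbb{E}_{S\sim\boldsymbol\mu}[\mathbb{E}_{u\sim\pi_t(\cdot\mid S)}[r]]$, and, because the lifted dynamics are deterministic in $\boldsymbol\mu$ given $\pi_t$, the tail equals $V_{t+1}^{\pi_{\ge t+1}}(T_t(\boldsymbol\mu,\pi_t))$. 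Taking $\sup$ first over the tail policy and then over $\pi_t$ gives the ``$\le$'' direction of \eqref{equ:dpp}; the reverse ``$\ge$'' direction follows by the usual $\varepsilon$-optimal concatenation, picking $\pi_t^\varepsilon$ nearly optimal for the right-hand side of \eqref{equ:dpp} and gluing on an $\varepsilon$-optimal tail for $V_{t+1}$ at $T_t(\boldsymbol\mu,\pi_t^\varepsilon)$.

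For existence of an optimizer, the plan is to prove by backward induction that each $V_{t+1}$ is upper semicontinuous on $\mathcal{P}(\mathcal{G}_{*,T-t-1}^{\mathcal{X}})$ and that the Bellman supremum is attained. The base case $V_T\equiv 0$ is trivial. For the inductive step: (a) the immediate-reward term is linear and weakly continuous in $\pi_t$ since $r$ is assumed bounded continuous; (b) $\pi_t\mapsto T_t(\boldsymbol\mu,\pi_t)$ is weakly continuous, which follows from the explicit integral representation of $T_t$ together with continuity of the local transition kernel $P$, evaluated against bounded continuous test functions on $(T-t-1)$-hop neighborhoods; (c) the admissible policy space of stochastic kernels $\mathcal{G}_{*,T-t}^{\mathcal{X}}\to\mathcal{P}(\mathcal{U})$ is compact in the Young/stable topology when $\mathcal{U}$ is compact and the domain is Polish. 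Combining (a)--(c) with the inductive u.s.c.\ of $V_{t+1}$, the Bellman functional is u.s.c.\ on a compact set and thus attains its supremum. A measurable selection theorem of Kuratowski--Ryll-Nardzewski or Berge type then produces a Borel-measurable optimal kernel $\pi_t^*$ as a function of $\boldsymbol\mu$ and simultaneously transfers u.s.c.\ to $V_t$, closing the induction.

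The delicate point I anticipate is establishing joint weak continuity of $T_t$ in $(\boldsymbol\mu,\pi_t)$ on the Benjamini--Schramm space. Since $T_t$ applies $\pi_t$ independently to every node of a (possibly countably infinite) graph before reading out the new rooted neighborhood, one must check that weak convergence of $\pi_t$ propagates through this product-kernel construction. The key enabler is again Theorem~\ref{thm:locality}: only the finitely many nodes within the $(T-t)$-ball of the root can influence the output $(T-t-1)$-hop neighborhood, reducing the problem to a finite product kernel where standard dominated-convergence and continuous-mapping arguments apply. The remaining routine steps --- verifying measurability of $V_t$ under the Benjamini--Schramm Borel structure, and confirming that the paper's compact-action and Polish-state-space assumptions support the above selection theorems --- would go in a technical appendix.
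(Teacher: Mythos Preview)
Your proposal is correct and follows essentially the same route as the paper for the Bellman recursion: split $V_t^\pi$ into the immediate reward plus the tail, use the deterministic lifted flow to identify the tail with $V_{t+1}^{\pi_{\ge t+1}}(T_t(\boldsymbol\mu,\pi_t))$, and close both inequalities via an $\varepsilon$-optimal gluing. The paper packages this slightly differently by introducing an action-value (IQ) function $Q_t(\boldsymbol\mu,h)=\sup_{\pi}Q_t^\pi(\boldsymbol\mu,h)$ and first proving the identity $V_t(\boldsymbol\mu)=\sup_h Q_t(\boldsymbol\mu,h)$, then reading off the DPP from the one-step $Q$-recursion; your direct value-function split is the same argument without the $Q$-function intermediary.

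On existence of an optimizer, your plan actually goes further than the paper's proof, which establishes only the Bellman identity and does not spell out the u.s.c./compactness/measurable-selection argument you outline. Your observation that Theorem~\ref{thm:locality} reduces the product-kernel continuity of $T_t$ to a finite-radius problem is exactly the right leverage point and is not made explicit in the paper's appendix.
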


This DPP gives rise to a natural solution as an MDP and via using RL, such as value-based or policy-based RL. In particular, note that the MDP is however at least continuous or in the worst case infinite-dimensional, since the action space is now given by the initial policy space of the finite graph problem. Therefore, in the following we will formulate a policy gradient algorithm based on this limiting SMFC problem, since the value-based approach is intractable.

\subsection{Policy Gradients on Sparse Graphs}
To provide a solid foundation for the proposed RL algorithm, we derive a policy gradient theorem for the sparse mean-field objective. We treat the problem as a "Lifted MDP" where the state is the neighborhood distribution $\boldsymbol{\mu}$.

To handle the complexity of the space, we employ a hierarchical MFC policy structure:
\begin{enumerate}[noitemsep,topsep=0pt]
    \item A \textbf{Meta-Policy} $\hat{\pi}_\theta(\psi \mid \boldsymbol{\mu})$ parameterized by $\theta$, which observes the global mean-field state $\boldsymbol{\mu}$ and samples a local policy parameter $\psi \in \Psi$.
    \item A \textbf{Local Policy} $\pi_\psi(u \mid s)$ parameterized by $\psi$, which agents use to select actions $u$ based on their local neighborhood $s$.
\end{enumerate}

\begin{assumption}[Regularity]
\label{ass:policy-domination}
The meta-policy $\hat{\pi}_\theta(\psi \mid \boldsymbol{\mu})$ is differentiable with respect to $\theta$ and satisfies the standard score function requirements. Specifically, $\nabla_\theta \log \hat{\pi}_\theta(\psi \mid \boldsymbol{\mu})$ exists and is bounded. The reward $r$ and transition dynamics are bounded.
\end{assumption}

Fix the horizon $T$. Let $\boldsymbol{\mu}_0$ be the initial distribution. The evolution of the system is deterministic at the level of distributions given the chosen parameters $\psi_t$:
\[ \boldsymbol{\mu}_{t+1} = T_t(\boldsymbol{\mu}_t, \pi_{\psi_t}). \]
However, since the meta-policy $\hat{\pi}_\theta$ is stochastic, the trajectory of parameters $\tau = (\psi_0, \dots, \psi_{T-1})$ is random, inducing a probability distribution over the trajectories of $\boldsymbol{\mu}_t$.

\begin{theorem}[Policy Gradient for the Limiting SMFC]
\label{thm:policy-gradient}
Consider the limiting SMFC problem formulated as a Lifted MDP on the state space $\mathcal{P}(\mathcal{G}_*^{\mathcal{X}})$. The gradient of the objective $J(\theta) = \mathbb{E}[\sum_{t=0}^{T-1} R(\boldsymbol{\mu}_t, \psi_t)]$ with respect to the meta-parameters $\theta$ is given by:
\begin{equation}
    \nabla_\theta J(\theta) = \sum_{t=0}^{T-1} \mathbb{E}_{\boldsymbol{\mu}_t, \psi_t} \left[ \nabla_\theta \log \hat{\pi}_\theta(\psi_t \mid \boldsymbol{\mu}_t) \cdot Q^{\text{MF}}_t(\boldsymbol{\mu}_t, \psi_t) \right],
\end{equation}
where $Q^{\text{MF}}_t(\boldsymbol{\mu}, \psi)$ is the cumulative future reward starting from state $\boldsymbol{\mu}$ with action $\psi$:
\begin{equation}
    Q^{\text{MF}}_t(\boldsymbol{\mu}, \psi) = R(\boldsymbol{\mu}, \psi) + \mathbb{E}_{\tau > t} \left[ \sum_{k=t+1}^{T-1} R(\boldsymbol{\mu}_k, \psi_k) \right],
\end{equation}
with $R(\boldsymbol{\mu}, \psi) \coloneqq \int r((s)_1, u) \pi_\psi(\mathrm du|s) \boldsymbol{\mu}(\mathrm ds)$.
\end{theorem}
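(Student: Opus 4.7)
The statement is the finite-horizon REINFORCE identity transported to the lifted MDP on $\mathcal{P}(\mathcal{G}_*^{\mathcal{X}})$. The plan is to (i) exhibit the law of the parameter trajectory in closed product form using the fact that the flow $\boldsymbol{\mu}_{t+1}=T_t(\boldsymbol{\mu}_t,\pi_{\psi_t})$ is deterministic given the history of $\psi$'s, (ii) apply the log-derivative trick after justifying interchange of gradient and expectation, (iii) remove acausal cross-terms via the score-function identity and the tower property, and (iv) recognize the remaining conditional expectation as $Q^{\mathrm{MF}}_t$.

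\textbf{Step (i): trajectory law.} Since $T_t$ is deterministic in its measure argument, for each $t$ there is a measurable map $F_t$ with $\boldsymbol{\mu}_t=F_t(\boldsymbol{\mu}_0,\psi_{0:t-1})$. Hence the law of $\tau=(\psi_0,\ldots,\psi_{T-1})$ factorizes as
\begin{equation*}
  p_\theta(\mathrm{d}\tau)=\prod_{t=0}^{T-1}\hat{\pi}_\theta\bigl(\mathrm{d}\psi_t\mid F_t(\boldsymbol{\mu}_0,\psi_{0:t-1})\bigr),
\end{equation*}
so $\nabla_\theta\log p_\theta(\tau)=\sum_{t=0}^{T-1}\nabla_\theta\log\hat{\pi}_\theta(\psi_t\mid\boldsymbol{\mu}_t)$. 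I would then rewrite $J(\theta)=\mathbb{E}_\tau[\sum_{t} R(\boldsymbol{\mu}_t,\psi_t)]$ as an integral against $p_\theta$.

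\textbf{Step (ii)--(iii): differentiate and remove acausal terms.} Under Assumption~\ref{ass:policy-domination}, $|r|$ and $\|\nabla_\theta\log\hat{\pi}_\theta\|$ are bounded; together with the finite horizon this yields an integrable dominating function, so dominated convergence permits differentiating under the expectation to obtain
\begin{equation*}
  \nabla_\theta J(\theta)=\mathbb{E}_\tau\!\left[\Bigl(\sum_{t=0}^{T-1}\nabla_\theta\log\hat{\pi}_\theta(\psi_t\mid\boldsymbol{\mu}_t)\Bigr)\Bigl(\sum_{t'=0}^{T-1}R(\boldsymbol{\mu}_{t'},\psi_{t'})\Bigr)\right].
\end{equation*}
For cross-terms with $t'<t$, let $\mathcal{F}_s=\sigma(\psi_{0:s})$. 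Since $\boldsymbol{\mu}_t$ is $\mathcal{F}_{t-1}$-measurable by step (i), the score identity gives $\mathbb{E}[\nabla_\theta\log\hat{\pi}_\theta(\psi_t\mid\boldsymbol{\mu}_t)\mid\mathcal{F}_{t-1}]=0$, and $R(\boldsymbol{\mu}_{t'},\psi_{t'})$ is $\mathcal{F}_{t'}\subseteq\mathcal{F}_{t-1}$-measurable, so the tower property kills these contributions.

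\textbf{Step (iv): identify $Q^{\mathrm{MF}}$.} The surviving causal terms give $\sum_t\mathbb{E}[\nabla_\theta\log\hat{\pi}_\theta(\psi_t\mid\boldsymbol{\mu}_t)\sum_{t'\ge t}R(\boldsymbol{\mu}_{t'},\psi_{t'})]$; conditioning on $(\boldsymbol{\mu}_t,\psi_t)$ and applying the tower property once more replaces the inner sum by $Q^{\mathrm{MF}}_t(\boldsymbol{\mu}_t,\psi_t)$ by its very definition, yielding the claim.

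\textbf{Main obstacle.} The only nontrivial point is the measurability chain underpinning step (iii): one must verify that $F_t$ is measurable so that $\boldsymbol{\mu}_t$ is $\mathcal{F}_{t-1}$-measurable and the conditional score identity applies. This reduces to joint measurability of $(\boldsymbol{\mu},\psi)\mapsto T_t(\boldsymbol{\mu},\pi_\psi)$, which follows from the continuity/regularity of the transition operator already invoked via \cite[Theorem~3.2]{lacker2023local} in Theorem~\ref{thm:convergence}. Once this is in place, the argument reduces to the standard tabular REINFORCE derivation, with the lifted, deterministic dynamics making the bookkeeping particularly clean.
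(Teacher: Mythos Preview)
Your proposal is correct, but it takes a different route from the paper. The paper derives the gradient via backward induction on the Bellman equation: it differentiates $V_t(\boldsymbol{\mu})=\int \hat\pi_\theta(\psi\mid\boldsymbol{\mu})\,Q_t(\boldsymbol{\mu},\psi)\,\mathrm d\psi$ with the product rule, applies the log-derivative trick to the first term, observes that the second term is $\mathbb{E}[\nabla_\theta V_{t+1}]$ because $R$ and $\mathcal T$ do not depend on $\theta$, and then unrolls the resulting recursion from $t=0$ to $T-1$. By contrast, you write the full trajectory density in product form (exploiting that $\boldsymbol{\mu}_t$ is a deterministic function of $\psi_{0:t-1}$), differentiate the whole objective at once via dominated convergence, and then prune acausal cross-terms using the score identity $\mathbb{E}[\nabla_\theta\log\hat\pi_\theta(\psi_t\mid\boldsymbol{\mu}_t)\mid\mathcal F_{t-1}]=0$ together with the tower property.

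Both arguments are standard; yours is the Williams-style REINFORCE derivation, the paper's is the Sutton-style value-based one. Your approach buys an explicit justification of the gradient--integral interchange (which the paper glosses over) and avoids setting up the Bellman recursion separately; the paper's approach ties more directly into the DPP it has already established and makes the role of $Q_t$ appear structurally rather than after a conditioning step. The measurability point you flag is real but minor, and your resolution via continuity of $(\boldsymbol{\mu},\psi)\mapsto T_t(\boldsymbol{\mu},\pi_\psi)$ is adequate.
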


The proof is provided in Appendix \ref{app:proof_pg}. We note that the form of this gradient is a standard result for MDPs; the novelty lies in the definition of the state space $\boldsymbol{\mu}$ and the rigorous limit justifying it.

\noindent\textbf{Finite-Graph Policy Gradient Approximation.}
The core practical challenge is that we cannot compute the gradient on the infinite limit directly. Instead, we compute the gradient on a finite graph $G_N$. Explicitly, the gradient we compute in Algorithm \ref{alg:smfac} is:
\begin{equation}
    \label{eq:finite_grad}
    \nabla_\theta J^N(\theta) = \mathbb{E} \left[ \sum_{t=0}^{T-1} \left( \sum_{k=t}^{T-1} \bar{r}^N_k \right) \nabla_\theta \log \hat{\pi}_\theta(\psi_t \mid \boldsymbol{\mu}^N_t) \right],
\end{equation}
where $\boldsymbol{\mu}^N_t$ is the empirical distribution of decorated neighborhoods in $G_N$, and $\bar{r}^N_k = \frac{1}{N} \sum_{i=1}^N r(s^i_k, u^i_k)$ is the average reward at time $k$.

We must establish that this finite computation approximates the true gradient of the underlying SMFC MDP.

\begin{theorem}[Finite-Graph PG Approximation] \label{thm:ctde}
    Let $J^N(\theta)$ be the objective on the finite graph $G_N$ and $J(\theta)$ be the limiting objective. Assume the transition operator $T_t$ and reward $r$ are Lipschitz continuous with respect to the weak topology on measures. Then, centralized training on the finite graph asymptotically optimizes the infinite limit:
    \begin{align*}
        \lim_{N \to \infty} \left\Vert \nabla_\theta J^N(\theta) - \nabla_\theta J(\theta) \right\Vert = 0.
    \end{align*}
\end{theorem}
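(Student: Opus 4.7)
The strategy is to write both gradients in a common REINFORCE-style trajectory form, couple the finite and limiting systems via local weak convergence, and then pass to the limit termwise using the boundedness in Assumption~\ref{ass:policy-domination} together with the Lipschitz properties of $T_t$ and $r$.

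First, I would rewrite $\nabla_\theta J^N(\theta)$ as in \eqref{eq:finite_grad} and record, via Theorem~\ref{thm:policy-gradient}, the analogous form $\nabla_\theta J(\theta) = \mathbb{E}\big[\sum_{t=0}^{T-1} \nabla_\theta\log\hat\pi_\theta(\psi_t\mid\boldsymbol{\mu}_t)\, Q^{\text{MF}}_t(\boldsymbol{\mu}_t,\psi_t)\big]$. Both are sums over the same horizon $T$, and by Assumption~\ref{ass:policy-domination} each summand is uniformly bounded by $T\|\nabla_\theta\log\hat\pi_\theta\|_\infty \|r\|_\infty$. It therefore suffices to establish termwise almost-sure convergence along a suitable coupling, after which dominated convergence closes the argument.

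Next, I would couple $G_N$ to $(\mathcal{G},\rho)$ through a Skorokhod representation of Benjamini--Schramm convergence, so that the initial empirical neighborhood measure $\boldsymbol{\mu}^N_0$ converges weakly to $\boldsymbol{\mu}_0$ almost surely. Conditioning on a fixed meta-trajectory $\tau=(\psi_0,\dots,\psi_{T-1})$, I would then prove $\boldsymbol{\mu}^N_t \Rightarrow \boldsymbol{\mu}_t$ by induction on $t$. The inductive step splits into two pieces: Lipschitz continuity of $T_t$ gives $T_t(\boldsymbol{\mu}^N_t,\pi_{\psi_t}) \Rightarrow T_t(\boldsymbol{\mu}_t,\pi_{\psi_t}) = \boldsymbol{\mu}_{t+1}$, while a sparse-graph law of large numbers shows that the genuine empirical measure $\boldsymbol{\mu}^N_{t+1}$ concentrates around the deterministic image $T_t(\boldsymbol{\mu}^N_t,\pi_{\psi_t})$. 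Combined with Lipschitz continuity of $r$, the same LLN applied to $\bar r^N_k$ yields $Q^N_t := \sum_{k\ge t}\bar r^N_k \to Q^{\text{MF}}_t(\boldsymbol{\mu}_t,\psi_t)$ in $L^1$. Continuity of the meta-policy score in its measure argument (inherited by typical GNN readouts from Benjamini--Schramm continuity, cf.\ Theorem~\ref{thm:convergence}) then delivers the desired termwise limit, and dominated convergence completes the proof.

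The main obstacle will be the sparse-graph LLN/propagation-of-chaos step that closes the induction. Although next-state transitions are conditionally independent given the decorated neighborhoods and drawn actions at time $t$, the $(T-t-1)$-hop neighborhoods defining the ``states'' at time $t+1$ overlap heavily on a sparse graph, precluding any naive independence argument. The remedy is to test $\boldsymbol{\mu}^N_{t+1}$ only against bounded continuous local functionals that metrize the Benjamini--Schramm topology and to bound the variance of each such functional by an $O(\Delta^{T}/N)$ term, exploiting the bounded-degree structure assumed in the local weak limit. This is the sparse-graph analogue of the propagation-of-chaos argument used implicitly in \citet[Thm.~3.2]{lacker2023local} and Theorem~\ref{thm:convergence}, and is the genuine technical heart of the result.
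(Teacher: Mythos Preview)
Your plan follows the same skeleton as the paper's proof---establish $\boldsymbol{\mu}^N_t \Rightarrow \boldsymbol{\mu}_t$ inductively, then use continuity of the reward and of the score function to pass to the limit---but the implementations differ in two respects. First, the paper works directly with weak convergence and a Portmanteau-style triangle-inequality split, inserting $Q^\infty_t(\boldsymbol{\mu}^N_t)$ between $Q^N_t(\boldsymbol{\mu}^N_t)$ and $Q^\infty_t(\boldsymbol{\mu}_t)$ and handling the two pieces by uniform convergence of $Q^N$ and by continuity of the limiting integrand respectively, whereas you couple via a Skorokhod representation and close with dominated convergence; these routes are interchangeable. Second, and more substantively, you explicitly isolate the sparse-graph law of large numbers needed to show that the empirical measure $\boldsymbol{\mu}^N_{t+1}$ concentrates about the deterministic image $T_t(\boldsymbol{\mu}^N_t,\pi_{\psi_t})$, and you sketch a variance argument for it; the paper folds this step into a one-line appeal to ``the continuous mapping theorem applied to the recursive dynamics'' and implicitly to \citet{lacker2023local}. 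Your treatment is therefore more honest about where the real work lies. One caveat: the $O(\Delta^{T}/N)$ variance bound you propose tacitly requires uniformly bounded degree, which is not among the stated hypotheses of this particular theorem (though it appears elsewhere as Assumption~\ref{ass:error_propagation_gnn}); you could either add that hypothesis or, as the paper effectively does, defer the concentration step to the cited local-weak-limit literature.
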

This validates Alg.~\ref{alg:smfac}: we can use standard PPO on the global graph state to optimize the mean-field control problem.

We have thus motivated a practical reinforcement learning approach
based on local function approximation.
The local policy $\pi_t$ takes a decorated neighborhood as input. Using a GNN for $\hat \pi$, the resulting algorithm is Algorithm~\ref{alg:smfac}, with its application to graphs visualized in Figure~\ref{fig:architecture}.

\begin{figure*}[t]
  \centering
  \resizebox{0.9\linewidth}{!}{
    \begin{tikzpicture}[
        node distance=1.0cm,
        font=\sffamily,
        agent/.style={circle, draw=nodeblue, fill=white, thick, minimum size=0.4cm, inner sep=0pt},
        root/.style={circle, draw=rootred, fill=rootred!20, thick, minimum size=0.5cm, inner sep=0pt},
        embedding/.style={rectangle, draw=gray!80, fill=green!10, minimum width=0.2cm, minimum height=0.6cm, rounded corners=1pt},
        globalvec/.style={rectangle, draw=rootred, fill=rootred!10, minimum width=0.4cm, minimum height=1.2cm, rounded corners=2pt, thick},
        process/.style={rectangle, draw=black, fill=white, rounded corners, minimum height=1cm, minimum width=2cm, align=center, drop shadow},
        arrow/.style={->, >=Stealth, thick, color=gray!40, line width=1.5mm},
        thinarrow/.style={->, >=Stealth, thick, color=black},
        labeltext/.style={font=\small\sffamily, align=center, color=black!80}
    ]

    \node[labeltext, font=\bfseries] (step1) {1. Input State};
    
    \begin{scope}[shift={(0,1.5)}, below=0.2cm of step1, local bounding box=inputGraph]
        \node[root] (r) at (0,0) {};
        \node[agent] (n1) at (0.8, 0.5) {};
        \node[agent] (n2) at (-0.6, 0.7) {};
        \node[agent] (n3) at (-0.5, -0.6) {};
        \draw[thick, linkgray] (r) -- (n1);
        \draw[thick, linkgray] (r) -- (n2);
        \draw[thick, linkgray] (r) -- (n3);
        \draw[thick, linkgray] (n1) -- (n2);
        
        \node[below right=0.1cm and -0.3cm of r, font=\scriptsize] {$G_N, \mathbf{x}_t$};
    \end{scope}

    \node[process, right=1.2cm of inputGraph] (gnn) {\textbf{GNN Encoder} ($\Phi_\theta$)\\ \scriptsize Message Passing ($k=T-t$)};
    
    \node[embedding, right=0.5cm of gnn, label=below:\scriptsize $h_1$] (e1) {};
    \node[embedding, right=0.1cm of e1, label=below:\scriptsize $h_2$] (e2) {};
    \node[embedding, right=0.1cm of e2, label=below:\scriptsize $\dots$] (e3) {};
    \node[embedding, right=0.1cm of e3, label=below:\scriptsize $h_N$] (e4) {};
    
    \draw[arrow] (inputGraph) -- (gnn);
    \draw[thinarrow] (gnn) -- (e1);

    \node[trapezium, draw=gray, fill=highlightbg, rotate=-90, minimum width=1.5cm, minimum height=1.0cm, right=1.5cm of e4, anchor=north] (pool) {};
    \node[align=center, font=\scriptsize, rotate=0] at (pool.center) {Mean\\Pool};
    
    \node[globalvec, above right=-0.1cm and 1.2cm of pool] (global) {};
    \node[below=0.1cm of global, labeltext] {Meta-Policy \\Parameters};

    \draw[thinarrow] (e4) -- (pool);
    \draw[thinarrow] (pool) -- (global);

    \node[process, right=1.0cm of global, fill=nodeblue!10] (mlp) {\textbf{Meta-Policy} $\hat{\pi}$\\ \scriptsize (MLP)};
    
    \draw[arrow] (global) -- (mlp);

    \node[circle, draw=black, fill=white, right=1.5cm of mlp, minimum size=1.8cm, align=center] (dist) {$\pi_t$};
    \node[above=0.1cm of dist, labeltext] {Sampled\\Local Policy};
    
    \draw[arrow] (mlp) -- (dist);

    
    \begin{scope}[shift={(5,-1.2)}, below=2.0cm of mlp]
        \node[root, label=left:\scriptsize Agent $i$] (agent_i) {};
        \node[agent] (nei1) at (0.6, 0.3) {};
        \node[agent] (nei2) at (0.4, -0.5) {};
        \draw[thick, linkgray] (agent_i) -- (nei1);
        \draw[thick, linkgray] (agent_i) -- (nei2);
        
        \node[right=1.5cm of agent_i, rectangle, draw=black, dashed, rounded corners, minimum height=1cm, align=center] (apply) {Execute\\$u_t^i \sim \pi_t(\cdot | \text{neigh}_i)$};
        
        \draw[thinarrow] (agent_i) -- (apply);
        
        \draw[thinarrow, dashed, red] (dist.south) |- (apply.east) node[midway, right, font=\scriptsize, color=red] {Parameters};
    \end{scope}
    
    \node[above=0.2cm of gnn, font=\bfseries, color=nodeblue] {1. Encoding};
    \node[above=1.2cm of pool, font=\bfseries, color=nodeblue] {2. Aggregation};
    \node[above=0.2cm of mlp, font=\bfseries, color=nodeblue] {3. Policy Gen.};
    
    \node[below=1.1cm of pool, font=\tiny, color=gray, align=center] {Approximates Integral\\Eq. \ref{eq:gnn_integration}};

    \end{tikzpicture}
  }
  \caption{\textbf{The Sparse Mean-Field RL Architecture.} The diagram illustrates the computational flow of Algorithm \ref{alg:smfac}. (1) The GNN encodes the local structure of the entire graph $G_N$. (2) A global pooling operation aggregates these embeddings to obtain a function of the mean-field state $\boldsymbol{\mu}_t$. (3) The Meta-Policy (Actor) uses this mean-field state-dependent output to generate the parameters for the local policy $\pi_t$. (Bottom) Individual agents $i$ then execute actions $u_t^i$ by conditioning this shared policy $\pi_t$ on their specific local neighborhoods.}
  \label{fig:architecture}
\end{figure*}
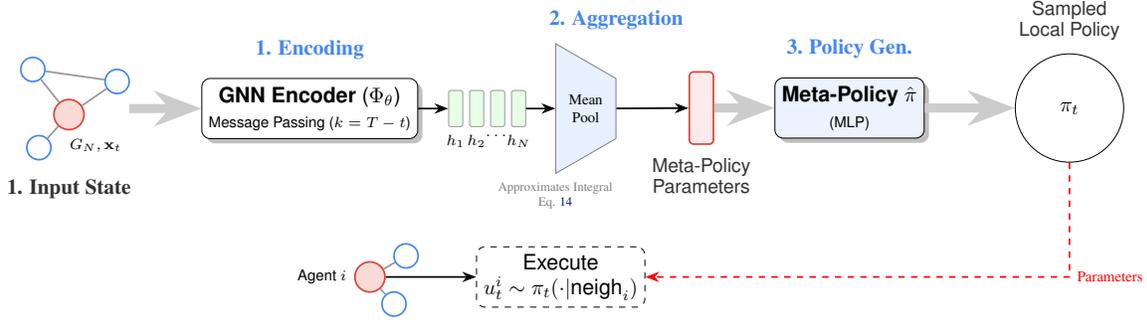

\begin{algorithm}[b!]
    \caption{Sparse Mean-Field Actor--Critic (SMFAC)}
    \label{alg:smfac}
    \begin{algorithmic}[1]
        \STATE Initialize actor GNN $\pi_\theta$ and critic $V_\phi$.
        \FOR {each training episode}
            \STATE Initialize $N$-agent system $(G_N, \mathbf{x}_0)$.
            \FOR {time $t = 0, \ldots, T-1$}
                \STATE Apply GNN to $G_N$ with $k=T - t$ message passes.
                \STATE Sample $\pi _t \sim \mathrm{MLP} \left( \frac 1 N \sum_{i=1} ^N \mathrm{GNN}_k ^i (G_N) \right).$
                \FOR{agent $i \in G_N$}
                    \STATE Sample $u_t^i \sim \pi_t(\cdot \mid \big((G_N,i), \mathbf x_t \big)_1)$.
                \ENDFOR
                \STATE Execute all actions, observe $\mathbf{x}_{t+1}$ and rewards $\mathbf{r}_t$.
                \STATE Store transitions in a replay buffer.
            \ENDFOR
            \STATE Compute advantages; update $\pi_\theta$ and $V_\phi$ using PPO.
        \ENDFOR
    \end{algorithmic}
\end{algorithm}

\subsection{Theoretical Analysis of GNN Usage}
\label{sec:theory_errors}
Our framework relies on two key approximations: (1) using a finite radius $k$ (via GNN depth) to approximate the optimal policy which formally requires radius $T-t$, and (2) using a finite graph $G_N$ to approximate the infinite limit. Here we provide theoretical bounds for both.

\subsubsection{Approximation by Truncated Policies.}
While Theorem \ref{thm:locality} guarantees that radius $T-t$ is sufficient, practical GNNs often use a fixed, smaller depth $k \ll T$. We show that if the dynamics are in some suitable sense not too much influenced by a single decoration in the $1$-hop neighborhood, then the approximation error is exponentially small. For the exact assumptions, we refer the reader to the Appendix.

\begin{theorem}[Truncation Error]
\label{thm:truncation}
Let $\pi^*$ be the optimal policy (requiring radius $T$) and $\pi^{(k)}$ be the optimal policy constrained to use only $k$-hop information. Then the loss in performance is bounded by
\begin{equation}
    |J(\pi^*) - J(\pi^{(k)})| \le O(\lambda^k),
\end{equation}
where $\lambda$ is explicitly given in terms of the maximum degree of the underlying graph and "smoothness" of dynamics $P$.
\end{theorem}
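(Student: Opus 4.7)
The plan is to combine the horizon-dependent locality of Theorem~\ref{thm:locality} with a correlation-decay (coupling) argument at the level of the lifted MDP of Theorem~\ref{thm:dpp}. First, I would define $\pi^{(k)}$ concretely as the maximizer of $J$ within the class of policies whose action at time $t$ is measurable with respect to the $\min(k,T-t)$-hop decorated neighborhood; this class is closed in the weak topology and nonempty, so under the same continuity/compactness used to prove Theorem~\ref{thm:dpp} a maximizer exists. Observe that for $t \ge T-k$ the constraint is vacuous and we may take $\pi^{(k)}_t = \pi^*_t$, so only the early times $t < T-k$ contribute to the gap.

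Next, I would apply a performance-difference lemma on the lifted MDP,
\begin{align*}
J(\pi^*) - J(\pi^{(k)}) = \sum_{t=0}^{T-k-1} \mathbb{E}\bigl[\, Q^{\pi^*}_t(\boldsymbol{\mu}^{(k)}_t,\pi^*_t) - Q^{\pi^*}_t(\boldsymbol{\mu}^{(k)}_t,\pi^{(k)}_t)\,\bigr],
\end{align*}
where $\boldsymbol{\mu}^{(k)}_t$ is the state reached under $\pi^{(k)}$ and $Q^{\pi^*}_t$ is the optimal $Q$-function from the Bellman equation \eqref{equ:dpp}. Because $\pi^{(k)}_t$ is, by construction, the best $k$-hop-measurable surrogate for $\pi^*_t$, each summand can be bounded agent-wise by $\sup_{S,S',u} |Q^{\pi^*}_t(S,u) - Q^{\pi^*}_t(S',u)|$, where $S$ and $S'$ agree on the $k$-hop neighborhood of the root but may differ outside. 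This reduces the theorem to controlling the sensitivity of the optimal $Q$-function to decorations located outside the $k$-ball.

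The core of the proof is then a geometric-decay estimate for this outer-shell sensitivity. Exploiting the 1-hop form of $P$ in \eqref{equ:state_kernel_sample}, I would unroll the Bellman recursion backwards from time $T$: information propagates at most one graph hop per time step, so a vertex at graph-distance $j$ from the root can affect any root-evaluated future reward over $T-t \le T$ steps only through paths of length $\ge j$. Under the smoothness assumption (phrased as a TV-Lipschitz bound on $P$ in its neighboring decorations with constant $L$), each propagation step contributes a multiplicative factor $L$, while the number of distance-$j$ vertices in a graph of maximum degree $d$ is at most $d(d-1)^{j-1} \le d^j$. Summing over $j>k$ gives an outer-shell influence of order $\sum_{j>k}(dL)^j = O(\lambda^k)$ with $\lambda = dL$, provided $\lambda<1$; this is exactly the regime of the theorem, and $\lambda$ is identified as the product of the degree-growth rate and the per-step smoothness constant.

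The main obstacle will be making the coupling rigorous while respecting policy-dependence: because $\pi^*_t$ itself uses $(T-t)$-hop information, its outputs depend on the very outer shell that is being perturbed, so a naive state-level coupling is insufficient and one must couple the joint (state, action) trajectories emanating from $S$ and $S'$. I would handle this by introducing an auxiliary trajectory that runs $\pi^*$ on a hybrid neighborhood in which the outer shell of $S$ has been replaced by that of $S'$, and then bound the TV-distance between the two induced distributions via a double telescoping across hops and across time steps, where the combinatorial $d^j$ factor enters through a union bound over the root-incident paths that can realize a perturbation at depth $j$. Summing the at most $T$ contributing time steps and boundedness of $r$ absorbs only a $T\|r\|_\infty/(1-\lambda)$ prefactor, yielding the claimed $|J(\pi^*) - J(\pi^{(k)})| = O(\lambda^k)$.
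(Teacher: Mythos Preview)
Your overall strategy---performance-difference lemma followed by a correlation-decay bound on the outer-shell sensitivity of the optimal $Q$-function---is the paper's route as well. However, there is a genuine gap at the reduction step, and the paper organizes the decay estimate more cleanly than your proposed coupling.

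\textbf{The gap.} You claim that because $\pi^{(k)}_t$ is ``the best $k$-hop-measurable surrogate,'' each advantage term $Q^{\pi^*}_t(\boldsymbol{\mu}^{(k)}_t,\pi^*_t)-Q^{\pi^*}_t(\boldsymbol{\mu}^{(k)}_t,\pi^{(k)}_t)$ is bounded by $\sup_{S,S',u}|Q^{\pi^*}_t(S,u)-Q^{\pi^*}_t(S',u)|$ with $S,S'$ agreeing on the $k$-ball. This does not follow: the summand is a difference in \emph{actions} at the \emph{same} state, and global optimality of $\pi^{(k)}$ among $k$-local policies is not a per-time-step statement. The paper closes this by first replacing $\pi^{(k)}$ with the explicit $k$-local surrogate $\tilde\pi_t(\cdot\mid S):=\pi^*_t(\cdot\mid \mathcal T_k(S))$, where $\mathcal T_k$ freezes all decorations outside the $k$-ball to a fixed value; since $\tilde\pi$ is $k$-local, $J(\pi^*)-J(\pi^{(k)})\le J(\pi^*)-J(\tilde\pi)$. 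The advantage at time $t$ then reads $\mathbb E_{u\sim\pi^*(\cdot\mid S_t)}[Q^*_t(S_t,u)]-\mathbb E_{\tilde u\sim\pi^*(\cdot\mid \mathcal T_k(S_t))}[Q^*_t(S_t,\tilde u)]$, and adding and subtracting $W^*_t(\mathcal T_k(S_t))$ splits it into two pieces, each of the form ``optimal value (resp.\ $Q$) at $S_t$ minus at $\mathcal T_k(S_t)$.'' Only after this move does a state-sensitivity bound apply.

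\textbf{The decay argument.} Instead of the hybrid-trajectory coupling you sketch, the paper introduces the weighted Hamming metric $d_\lambda(S,S')=\sum_{v:\mathbf x(v)\ne\mathbf x'(v)}\lambda^{\|\rho-v\|}$ with $\lambda=\alpha\Delta$ (your $L$ and $d$), and proves by a single backward induction that $|W^*_t(S)-W^*_t(S')|\le L_t\,d_\lambda(S,S')$. The induction step shows $\mathbb E[d_\lambda(S_{\mathrm{next}},S'_{\mathrm{next}})]\le d_\lambda(S,S')$ directly from the single-flip TV bound and the degree bound---no path enumeration or auxiliary hybrid process is needed. This also dissolves the obstacle you flag: because the Lipschitz bound is established for the \emph{optimal} value function (with the sup inside the Bellman recursion), one never has to track how $\pi^*$'s own outputs depend on the outer shell. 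The final step is simply $d_\lambda(S_t,\mathcal T_k(S_t))\le\sum_{j>k}\Delta^j\lambda^j$, giving decay rate $(\Delta\lambda)^k=(\alpha\Delta^2)^k$ under the assumption $\alpha\Delta^2<1$; note the extra degree factor relative to your $dL$.
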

This result justifies the use of shallow GNNs even for problems with longer horizons, provided graph interactions are not pathological (e.g., absence of critical phase transitions).

\subsubsection{GNNs as Universal Mean-Field Approximators}

A central theoretical question is whether Graph Neural Networks are capable of representing the functions required for Sparse MFC, specifically the value function $V(\boldsymbol{\mu})$ and the meta-policy $\hat{\pi}(\boldsymbol{\mu})$. Since the state $\boldsymbol{\mu}$ is a probability measure over the space of decorated rooted graphs $\mathcal{G}_{*,k}^{\mathcal{X}}$, any approximator must satisfy two key geometric properties: (1) permutation invariant with respect to agent indices (measure property), and (2) isomorphism invariant with respect to the local neighborhood structure (the graph property).

We analyze this through the lens of Geometric Deep Learning \cite{bronstein2021geometric}, decomposing the approximation into a global aggregation step and a local encoding step.

\noindent\textbf{Global Aggregation (Deep Sets).}
The value function $V: \mathcal{P}(\mathcal{G}_{*,k}^{\mathcal{X}}) \to \mathbb{R}$ is a function on a set (or measure). The Deep Sets theorem \cite{zaheer2017deep} establishes that any continuous permutation-invariant function on a set can be decomposed into the form $\rho(\sum \phi(x))$. In our mean-field context, this implies that the value function can be approximated as:
\begin{equation}
    V(\boldsymbol{\mu}) \approx \rho \left( \int_{\mathcal{G}_{*,k}^{\mathcal{X}}} \phi(\mathfrak{g}) \, \mathrm{d}\boldsymbol{\mu}(\mathfrak{g}) \right),
\end{equation}
where $\phi$ is a continuous embedding of a rooted graph, and $\rho$ is a continuous function (e.g., an MLP).

\noindent\textbf{Local Encoding (Weisfeiler-Leman).}
The function $\phi(\mathfrak{g})$ must encode the topology of the rooted neighborhood $\mathfrak{g}$. It is established that Message Passing Neural Networks (MPNNs) are as powerful as the 1-dimensional Weisfeiler-Leman (1-WL) graph isomorphism test \cite{xu2018how,morris2019weisfeiler,maron2019provably}. Therefore, parameterizing $\phi$ as a GNN allows us to distinguish all neighborhood structures distinguishable by the 1-WL test.

\begin{proposition}[GNN Readout as Mean-Field Integration]
Let $\Phi_\theta: \mathcal{G}_{*,k}^{\mathcal{X}} \to \mathbb{R}^d$ denote a $k$-layer MPNN parameterized by $\theta$. The operation of applying the GNN to the global graph $G_N$ and performing global mean pooling is mathematically equivalent to the Deep Sets decomposition over the empirical measure $\boldsymbol{\mu}^N_t$:
\begin{equation}
    \label{eq:gnn_integration}
    \underbrace{\frac{1}{N} \sum_{i=1}^N \Phi_\theta\bigl( (G_N, i)_k \bigr)}_{\text{Global Average Pooling}} 
    \;=\; 
    \underbrace{\int_{\mathcal{G}_{*,k}^{\mathcal{X}}} \Phi_\theta(\mathfrak{g}) \, \boldsymbol{\mu}^N_t(\mathrm{d}\mathfrak{g})}_{\text{Expectation over Mean Field}}.
\end{equation}
\end{proposition}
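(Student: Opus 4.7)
The plan is to prove the identity in two conceptually separate steps: first establish that a $k$-layer MPNN evaluated at node $i$ is a function purely of the decorated $k$-hop rooted neighborhood (up to isomorphism), and second observe that the global mean pooling is, by definition, integration against the empirical measure of these neighborhoods. Both pieces are well-known, so the "proof" is really an unpacking of definitions, but the claim should be stated precisely so that the identity is not merely formal.

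First I would fix notation. Write $\mathfrak{g}_i := ((G_N, i), \mathbf{x}_t)_k$ for the decorated $k$-hop neighborhood of agent $i$, viewed as an element of $\mathcal{G}_{*,k}^{\mathcal{X}}$ (equivalence class under rooted isomorphism). Then by construction
\begin{equation*}
    \boldsymbol{\mu}^N_t \;=\; \frac{1}{N}\sum_{i=1}^N \delta_{\mathfrak{g}_i}.
\end{equation*}
Integration against a finite sum of Dirac masses is the corresponding finite sum of evaluations, so for any measurable $\Phi_\theta: \mathcal{G}_{*,k}^{\mathcal{X}} \to \mathbb{R}^d$,
\begin{equation*}
    \int_{\mathcal{G}_{*,k}^{\mathcal{X}}} \Phi_\theta(\mathfrak{g})\,\boldsymbol{\mu}^N_t(\mathrm{d}\mathfrak{g}) \;=\; \frac{1}{N}\sum_{i=1}^N \Phi_\theta(\mathfrak{g}_i).
\end{equation*}

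Second I would verify that the left-hand side of the proposed identity, namely $\Phi_\theta(G_N, i)_k$ interpreted as the $i$-th node output of the MPNN applied to the full graph $G_N$, coincides with $\Phi_\theta(\mathfrak{g}_i)$. This is the standard locality/receptive-field lemma for message passing: each layer aggregates only over one-hop neighbors, so after $k$ rounds the output at $i$ depends only on features and edges within graph distance $k$ of $i$. Moreover, since message passing uses permutation-invariant neighbor aggregators, the output at $i$ is invariant under rooted isomorphism of this $k$-neighborhood and thus well-defined on the equivalence class $\mathfrak{g}_i \in \mathcal{G}_{*,k}^{\mathcal{X}}$. Combining the two displays gives the stated equality.

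The only genuine subtlety, and therefore the main place to be careful, is the identification of $\Phi_\theta$ as a function on $\mathcal{G}_{*,k}^{\mathcal{X}}$ rather than on labelled rooted graphs. If one works with MPNNs that use sum or mean aggregators, permutation invariance in each layer is immediate, so $\Phi_\theta$ descends to the quotient; I would simply cite this together with the locality-of-message-passing fact (e.g., via \cite{xu2018how,morris2019weisfeiler}) rather than re-derive it. Everything else, including the interchange of sum and evaluation, is a one-line consequence of the definition of the empirical measure, so no analytic subtlety arises.
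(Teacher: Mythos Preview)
Your proposal is correct and matches the paper's (implicit) reasoning: the paper does not spell out a proof but treats the identity as an immediate consequence of the preceding discussion on Deep Sets decomposition and the locality/isomorphism-invariance of $k$-layer MPNNs, which is exactly what you unpack. Your explicit separation into (i) the receptive-field/locality lemma ensuring $\Phi_\theta$ descends to a well-defined map on $\mathcal{G}_{*,k}^{\mathcal{X}}$, and (ii) the definitional identity between averaging and integration against the empirical measure, is precisely the right way to make the statement rigorous.
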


This identity provides the rigorous justification for our algorithmic choice. It confirms that the standard "GNN + Readout" architecture is not merely a heuristic, but a direct implementation of the function approximation required for the Mean-Field state space.
Consequently, we approximate the global value function (Critic) and similarly the meta-policy as:
\begin{equation}
    V_\phi(\boldsymbol{\mu}^N_t) \approx \rho_\phi \left( \frac{1}{N} \sum_{i=1}^N \Phi_{\theta_{\text{enc}}}\bigl( (G_N, i)_k \bigr) \right).
\end{equation}

\begin{remark}[Expressivity Limitations]
While this framework is robust, it inherits the limitations of the 1-WL test. If the optimal control depends on distinguishing local structures that are 1-WL equivalent (e.g., certain circular skip-connections vs. disconnected components), standard MPNNs may underfit. In such cases, the framework naturally extends to higher-order GNNs (k-GNNs) \cite{morris2019weisfeiler} without altering the mean-field logic.
\end{remark}

\section{Experiments}
We validate the necessity of the proposed sparse mean-field framework on two epidemic control scenarios. We compare against the strong but heuristic LWMFMARL baseline \cite{fabian2025learning} which conditions actions on global population statistics and node degrees. While already very effective in many settings with millions of nodes, this baseline is heuristic and relies on aggregated representations, discarding fine-grained structural information. We rigorously generalize this heuristic baseline. The experiments isolate two complementary failure modes of mean-field control on sparse graphs and demonstrate how local, graph-aware policies resolve them. The setup is found in Appendix~\ref{app:exp}.

\subsection{Resolving Local Heterogeneity}
In supercritical infection regimes where infection costs dominate, purely mean-field/LWMFMARL controllers converge to a degenerate \emph{vaccinate-all} policy. More generally, because they only observe aggregated statistics, they cannot distinguish locally exposed nodes from spatially isolated ones. On sparse graphs, this behavior is highly suboptimal, as vaccinating a small number of strategically placed agents is often sufficient to block transmission.
\begin{figure}[t]
    \centering
    \includegraphics[width=1\linewidth]{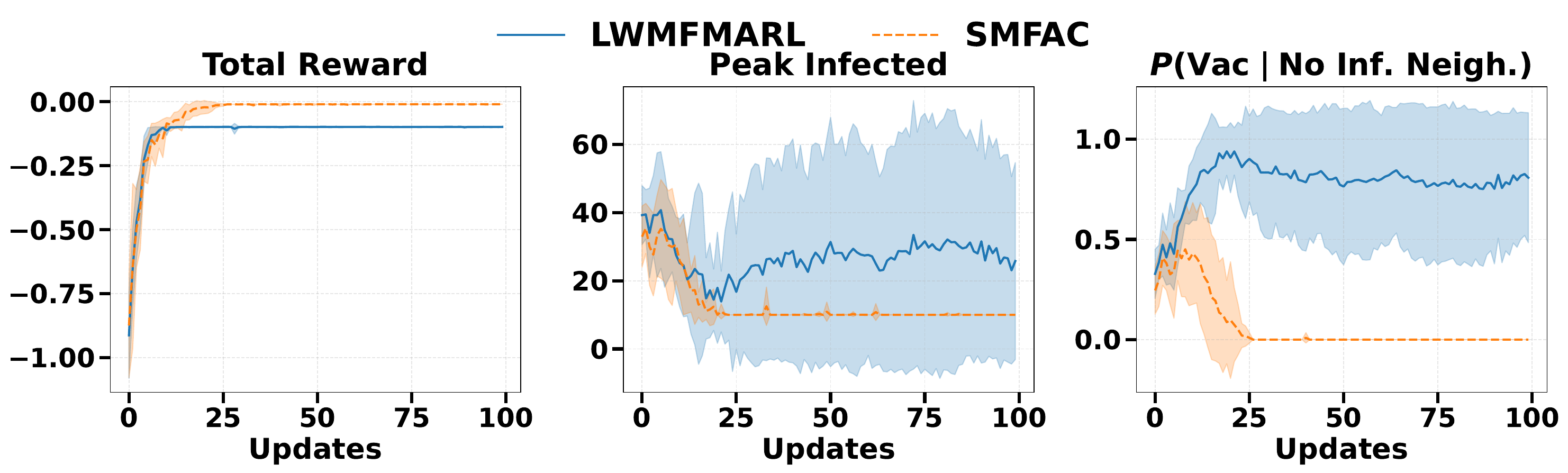}
    \caption{The additional information at the local node makes it easier for the agents to decide when to vaccinate, yielding a better policy compared to only local information.}
    \label{fig:grid_loss}
\end{figure}

\begin{figure}[b]
\centering
\resizebox{0.8\columnwidth}{!}{%
\begin{tikzpicture}[
  node distance=1.5cm,
  susceptible/.style={circle, draw=blue!80, fill=blue!10, very thick, minimum size=0.6cm, inner sep=0pt},
  infected/.style={circle, draw=red!80, fill=red!20, very thick, minimum size=0.6cm, inner sep=0pt},
  edge_style/.style={-, thick, draw=gray!60},
  label_style/.style={font=\sffamily\bfseries\large, align=center}
]

\tikzset{every node/.append style={font=\sffamily\bfseries\large}}


\newcommand{\drawTriangle}[4]{%
  \begin{scope}[shift={(#1,#2)}]
    \foreach \state [count=\i] in {#4} {%
      \ifnum\i=1 \def\ang{90}\fi
      \ifnum\i=2 \def\ang{210}\fi
      \ifnum\i=3 \def\ang{330}\fi
      \node[\state] (t#3-\i) at (\ang:0.8) {};
    }%
    \draw[edge_style] (t#3-1) -- (t#3-2) -- (t#3-3) -- (t#3-1);
  \end{scope}%
}

\newcommand{\drawTriangleDown}[4]{%
  \begin{scope}[shift={(#1,#2)}]
    \foreach \state [count=\i] in {#4} {%
      \ifnum\i=1 \def\ang{270}\fi
      \ifnum\i=2 \def\ang{30}\fi
      \ifnum\i=3 \def\ang{150}\fi
      \node[\state] (t#3-\i) at (\ang:0.8) {};
    }%
    \draw[edge_style] (t#3-1) -- (t#3-2) -- (t#3-3) -- (t#3-1);
  \end{scope}%
}

\node[label_style] at (4, 2.9) {Scenario A: Concentrated Infection};

\drawTriangle{0}{0}{1}{susceptible, susceptible, susceptible}
\drawTriangle{4}{0}{2}{infected, infected, infected}
\drawTriangle{8}{0}{3}{susceptible, susceptible, susceptible}

\drawTriangleDown{2}{0}{7}{infected, infected, infected}
\drawTriangleDown{6}{0}{8}{infected, infected, infected}
\drawTriangleDown{10}{0}{9}{infected, infected, infected}

\node[right=0.5cm of t9-2, text=gray] (gnn_in) {Input $G_A$};
\node[rectangle, draw=black, fill=orange!10, right=0.2cm of gnn_in,
      align=center, rounded corners] (gnn) {GNN\\Encoder};
\node[right=0.5cm of gnn] (policy) {$\hat{\pi}_A$};
\draw[-Latex, thick] (gnn_in) -- (gnn);
\draw[-Latex, thick] (gnn) -- (policy);

\draw[dashed, gray] (-1, -2.0) -- (15, -2.0);

\node[label_style] at (4, -2.6) {Scenario B: Dispersed Infection};
\def\yoffset{-5}


\drawTriangle{0}{\yoffset}{4}{infected, susceptible, susceptible}
\drawTriangle{4}{\yoffset}{5}{susceptible, infected, susceptible}
\drawTriangle{8}{\yoffset}{6}{susceptible, susceptible, infected}

\drawTriangleDown{2}{\yoffset}{10}{infected, susceptible, susceptible}
\drawTriangleDown{6}{\yoffset}{11}{susceptible, susceptible, susceptible}
\drawTriangleDown{10}{\yoffset}{12}{susceptible, susceptible, susceptible}

\node[right=0.5cm of t12-2, text=gray] (gnn_in_b) {Input $G_B$};
\node[rectangle, draw=black, fill=orange!10, right=0.2cm of gnn_in_b,
      align=center, rounded corners] (gnn_b) {GNN\\Encoder};
\node[right=0.5cm of gnn_b] (policy_b) {$\hat{\pi}_B$};
\draw[-Latex, thick] (gnn_in_b) -- (gnn_b);
\draw[-Latex, thick] (gnn_b) -- (policy_b);

\end{tikzpicture}%
} 

\caption{Two infection scenarios with identical global counts but different local structure.}
\label{fig:two_scenarios}
\end{figure}

\begin{figure}[t]
    \centering
    \includegraphics[width=0.7\linewidth]{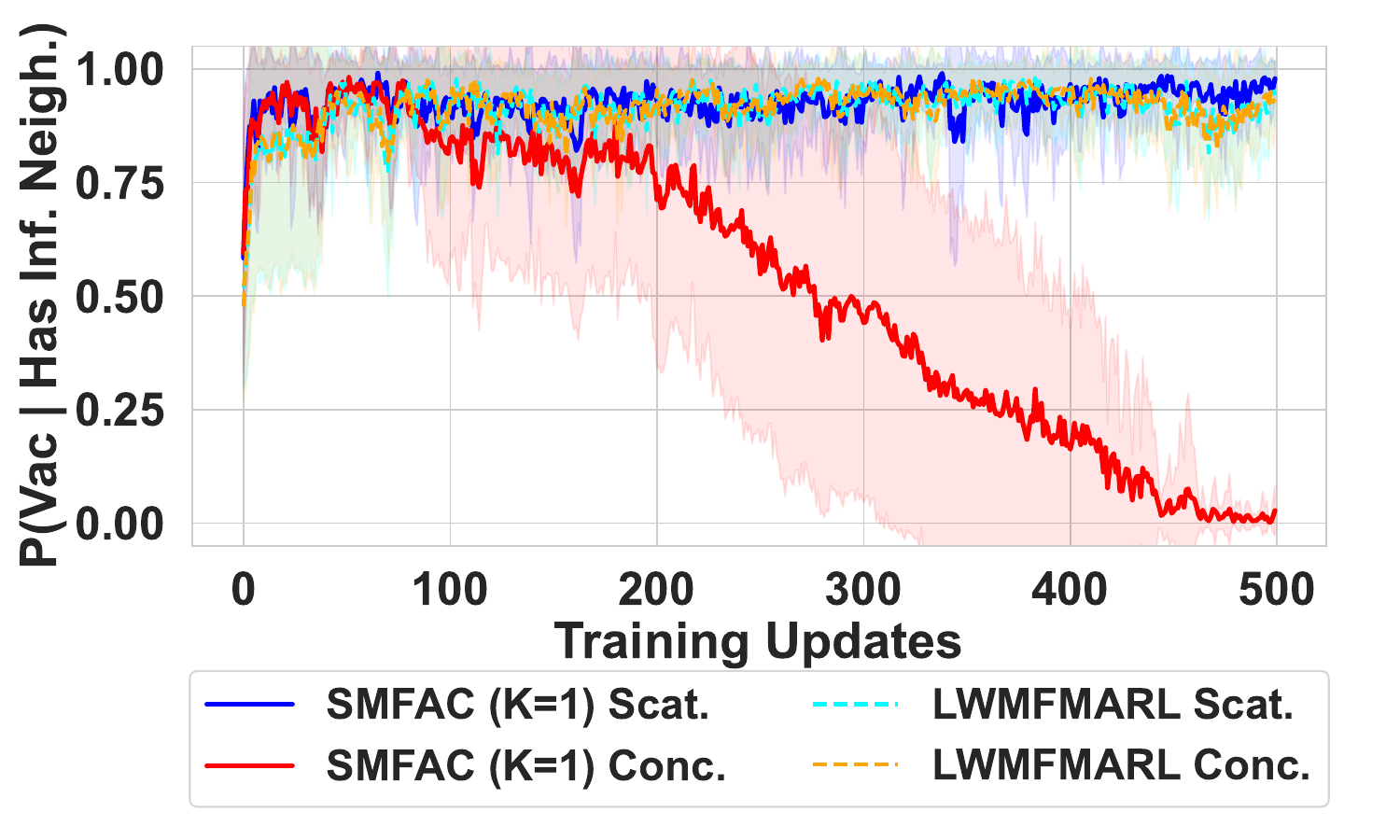}
    \caption{Applying a GNN encoder on the underlying graph can be essential in finding the optimal policy. Observe that our algorithm is able to find the optimal policy (not vaccinating nodes with infected neighbor if disease is already contained), where as the LWMFMARL is not able to distinguish between the concentrated and scattered initial state.}
    \label{fig:triangle_probs}
\end{figure}

We illustrate this effect on a two-dimensional grid. Figure~\ref{fig:grid_loss} shows that incorporating local neighborhood information via a lifted state space yields substantially lower loss and more selective interventions. Agents learn to condition actions on nearby infection states rather than reacting uniformly to global infection levels.

\subsection{Disambiguating Global Structure}
We next expose a more fundamental limitation of MFC and existing LWMFMARL: distinct local configurations can induce identical global statistics. Any controller that conditions solely on aggregated information is thus unable to distinguish such states, leading to suboptimal actions.

For simplicity, consider $N=20$ disjoint triangles with $5$ initially infected nodes. With equal probability, infections are either concentrated within a few triangles (Scenario~A in Figure~\ref{fig:two_scenarios}) or dispersed across many (Scenario~B), while global infection counts remain identical. A global penalty is incurred if more than a fixed number of triangles contain multiple infected nodes. When infections are dispersed, local containment is required; when they are concentrated, non-intervention is optimal. Since both cases induce the same MF statistics, the MF baseline must select the same action and is therefore suboptimal in at least one regime.

Figure~\ref{fig:triangle_probs} shows that the GNN-based policy successfully conditions its behavior on the underlying graph structure and separates the two regimes, whereas the baseline (equivalently, a GNN with zero message-passing steps) collapses to a single compromised policy.

\noindent\textbf{Experimental Takeaways.}
Across both experiments, MF aggregation fails either by obscuring localized infection pressure or by collapsing distinct configurations into identical global statistics. Neighborhood-aware policies resolve both issues by conditioning actions on local graph structure, improving MFC for sparse graphs.

\section{Conclusion and Discussion}

We have presented a rigorous framework for MFC on sparse graphs, overcoming the limitations of the classical dense interaction assumption by lifting the system state to distributions over decorated neighborhoods. Central to this approach is our horizon-dependent locality theorem, which establishes a tractable DPP and theoretically justifies the use of GNNs as local policy approximators. While this framework provides a solid theoretical grounding, practical challenges remain: the state space of neighborhood distributions can be high-dimensional, potentially increasing variance in gradient estimates. Furthermore, the algorithmic performance is inherently bounded by the expressivity of the GNN architecture (e.g., the Weisfeiler-Leman hierarchy).

Future research could extend this foundation to non-cooperative Sparse MFGs, coupling the Hamilton-Jacobi-Bellman equation for local agents with a Fokker-Planck-Kolmogorov equation for the neighborhood evolution to characterize Nash equilibria. Additionally, extending our finite-horizon results to the infinite-horizon setting offers a promising avenue; proving that the Bellman operator acts as a contraction on the space of neighborhood distributions would guarantee the existence of stationary optimal policies, further bridging the gap between mean-field theory and practical network control.

\clearpage
\noindent\textbf{Impact statement:} This paper presents work whose goal is to advance the field of Machine Learning. There are many potential societal consequences of our work, none which we feel must be specifically highlighted here.

\bibliographystyle{icml2026}
\bibliography{main}

\newpage
\appendix
\onecolumn
\section{Proof of Theorem \ref{thm:locality}}

\label{app:proof_locality}
We start with the following observation.
\begin{proposition}
\label{prop:kernel-randomization}
Let $(E,\mathcal E)$ be a measurable space and let $(S,\mathcal S)$ be a standard
Borel space. Let
\[
\pi : E \times \mathcal S \to [0,1]
\]
be a Markov kernel, i.e.\ for each $x\in E$, $\pi(x,\cdot)$ is a probability
measure on $(S,\mathcal S)$, and for each $A\in\mathcal S$, the map
$x \mapsto \pi(x,A)$ is $\mathcal E$-measurable.

Then there exist a measurable function
\[
f : E \times [0,1] \to S
\]
and a random variable $U \sim \mathrm{Unif}[0,1]$ such that for every fixed
$x\in E$,
\[
f(x,U) \sim \pi(x,\cdot).
\]
Equivalently, for all $A \in \mathcal S$,
\[
\mathbb P\bigl(f(x,U) \in A\bigr) = \pi(x,A).
\]
\end{proposition}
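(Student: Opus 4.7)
The plan is to reduce to a one-dimensional situation and then build $f$ explicitly as a parametrized quantile transform. Since $(S,\mathcal S)$ is standard Borel, there is a measurable isomorphism $\iota : S \to B$ onto some Borel subset $B \subseteq [0,1]$. Composing $\pi$ with $\iota$ produces a Markov kernel $\tilde\pi(x,\cdot) := \iota_*\pi(x,\cdot)$ on $[0,1]$, and joint measurability of $f$ and the distributional identity transfer through $\iota^{-1}$. So it suffices to treat kernels into $([0,1],\mathcal B([0,1]))$ (or equivalently into $\mathbb R$); the final $f$ is then $\iota^{-1}\circ\tilde f$, after taking care that $\tilde f(x,U)\in B$ almost surely, which will be automatic since $\tilde\pi(x,\cdot)$ is concentrated on $B$ for every $x$.

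In the reduced setting, I would define the CDF $F_x(s) := \tilde\pi\bigl(x,(-\infty,s]\bigr)$ for $s\in\mathbb R$ and its left-continuous inverse (quantile function)
\[
F_x^{-1}(u) \;:=\; \inf\bigl\{ s\in\mathbb R : F_x(s) \geq u \bigr\}, \qquad u\in(0,1).
\]
Set $\tilde f(x,u) := F_x^{-1}(u)$ and extend it in any measurable way on the null set $\{0,1\}$. The distributional claim $\tilde f(x,U)\sim\tilde\pi(x,\cdot)$ for each fixed $x$ is exactly the classical inverse-CDF sampling identity, which follows from the equivalence $F_x^{-1}(u)\leq s \iff u\leq F_x(s)$ (using right-continuity and monotonicity of $F_x$) applied to $U\sim\mathrm{Unif}[0,1]$.

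The real work is joint measurability of $\tilde f:\mathbb R\times(0,1)\to\mathbb R$ as a function of both arguments. For each fixed $s\in\mathbb R$, the slice $x\mapsto F_x(s)=\tilde\pi(x,(-\infty,s])$ is $\mathcal E$-measurable by the Markov kernel hypothesis, so $(x,u)\mapsto F_x(s)-u$ is jointly measurable. Using the equivalence above,
\[
\bigl\{(x,u) : \tilde f(x,u) \leq s \bigr\} \;=\; \bigl\{(x,u) : u \leq F_x(s)\bigr\},
\]
and the right-hand side is measurable in $\mathcal E\otimes\mathcal B((0,1))$. Since sets of the form $(-\infty,s]$ generate $\mathcal B(\mathbb R)$, this proves joint measurability of $\tilde f$. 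Pulling back through $\iota^{-1}$, the composite $f(x,u):=\iota^{-1}(\tilde f(x,u))$ is jointly measurable into $(S,\mathcal S)$ and satisfies $f(x,U)\sim\pi(x,\cdot)$.

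The main obstacle I anticipate is just the bookkeeping of the reduction to $\mathbb R$: one must check that $\tilde\pi(x,\cdot)$ really is a Markov kernel on $[0,1]$ (measurability of $x\mapsto\tilde\pi(x,A)$ for $A\in\mathcal B([0,1])$ follows from $\tilde\pi(x,A)=\pi(x,\iota^{-1}(A))$ and $\iota^{-1}(A)\in\mathcal S$) and that the exceptional set where $\tilde f(x,U)\notin B$ is $\mathbb P$-null uniformly in $x$, so that a clean deterministic function $f$ with the stated property on all of $E$ can be produced. Everything else is a direct computation once the quantile construction is in place.
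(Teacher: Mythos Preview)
Your proposal is correct and follows the standard quantile-transform construction; the paper itself does not give a proof at all but simply cites Kallenberg, where essentially this same argument (Borel isomorphism to $[0,1]$ followed by the generalized inverse CDF) appears. The bookkeeping point you flag about $\tilde f(x,U)\in B$ is handled exactly as you suggest: redefine $f$ to a fixed $s_0\in S$ on the jointly measurable set $\{(x,u):\tilde f(x,u)\notin B\}$, whose $u$-sections are Lebesgue-null for every $x$, so the law of $f(x,U)$ is unchanged.
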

The proof of this Proposition is standard and can be found in slightly different phrasing in e.g. \cite{kallenberg1997foundations}. In order to formalize what we mean by ``the value function only depending on the local neighborhood``, we use the basic concept of measurability. By what we have just shown, every Markov kernel $\pi$ can be identified with a function $f_\pi$ and a random variable $U_\pi$, effectively factoring out the randomness into $U_\pi$. We can therefore associate a random variable $U_t ^v$ and $W_t ^v$ to the evolution of agent $v$ at time $t$, where $U_t ^v$ corresponds to the action choice of agent sitting at node $v$ at time $t$ (i.e., using $\pi_t$). Similarly, $W_t ^v$ describes the state evolution of agent $v$ at time $t$ \footnote{Note that while here the transition kernel stays the same, we still need a new random variable for each agent at each time. The difference to the policy is that these random variables all parameterize the same function, where as in the policy case, for each time $t$, different functions are parameterized.}.

For any rooted graph $(\mathcal G, \rho)$ we define the $\sigma$-algebra
$$\mathcal F_k ^\rho := \sigma \left( \left\{ \mathbf x_0 ^v, U_t ^v, W_t ^v : \Vert v - \rho \Vert \le k, t < k\right\} \right).$$
The definition of $\mathcal F_k$ tracks all probabilistic inputs that are needed in order to determine the state of the root node at time $k$ (and hence its action at time $k$).

\begin{lemma}
    \label{lemma:action_is_local}
    For any $i \in V(\mathcal G)$, $\mathbf x^i _t$ is measurable with respect to $\mathcal F_{t} ^i$.
\end{lemma}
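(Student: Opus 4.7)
\medskip

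\noindent\textbf{Proof plan.} The natural approach is induction on $t$, exploiting the fact that both the policy kernel in \eqref{equ:action_kernel_sample} and the transition kernel in \eqref{equ:state_kernel_sample} are \emph{local} in the sense that they read off only the $1$-hop decorated neighborhood of the agent. By Proposition~\ref{prop:kernel-randomization}, I can realise each kernel as a deterministic measurable function of its input together with an independent uniform randomizer; I will use the canonical choices $f_{\pi_t}$ (driven by $U_t^v$) for the policy and $g_P$ (driven by $W_t^v$) for the state transition. The plan is to propagate the locality of the state one hop per time step.

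\medskip

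\noindent\textbf{Base case.} At $t=0$, the state $\mathbf x_0^i$ is, by definition of $\mathcal F_0^i = \sigma(\mathbf x_0^i)$, trivially $\mathcal F_0^i$-measurable.

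\medskip

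\noindent\textbf{Inductive step.} Assume the claim at time $t$ for every vertex. Using the randomization representation, write
\begin{align*}
    u_t^i &= f_{\pi_t}\bigl( ((\mathcal G, i), \mathbf x_t)_1,\, U_t^i \bigr), \\
    \mathbf x_{t+1}^i &= g_P\bigl( ((\mathcal G, i), \mathbf x_t)_1,\, u_t^i,\, W_t^i \bigr).
\end{align*}
The decorated $1$-neighborhood $((\mathcal G, i), \mathbf x_t)_1$ is a measurable function of the states $\{\mathbf x_t^v : \|v-i\|\le 1\}$ together with the (deterministic) local graph structure. By the inductive hypothesis, each such $\mathbf x_t^v$ is $\mathcal F_t^v$-measurable, so the key step is to verify the containment
\[
    \mathcal F_t^v \;\subseteq\; \mathcal F_{t+1}^i \qquad \text{whenever } \|v-i\|\le 1.
\]
This follows directly from the triangle inequality applied to the generators of $\mathcal F_t^v$: if $\|w-v\|\le t$ and $\|v-i\|\le 1$, then $\|w-i\|\le t+1$; and any time index $s<t$ also satisfies $s<t+1$. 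Hence every generator of $\mathcal F_t^v$ appears among the generators of $\mathcal F_{t+1}^i$. The additional randomizers $U_t^i$ and $W_t^i$ sit at the root itself (distance $0\le t+1$) at time $t<t+1$, so they also belong to $\mathcal F_{t+1}^i$.

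\medskip

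\noindent\textbf{Conclusion.} Since $\mathbf x_{t+1}^i$ is a measurable function of objects all of which are $\mathcal F_{t+1}^i$-measurable, the claim follows. The only genuinely delicate point is the bookkeeping in the containment $\mathcal F_t^v \subseteq \mathcal F_{t+1}^i$; the rest is a direct consequence of the $1$-hop locality of $\pi_t$ and $P$ and the kernel randomization of Proposition~\ref{prop:kernel-randomization}.
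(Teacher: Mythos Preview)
Your proof is correct and follows essentially the same route as the paper's: induction on $t$, with the inductive step reducing to the containment $\mathcal F_t^v \subseteq \mathcal F_{t+1}^i$ for $\|v-i\|\le 1$ (via the triangle inequality on the generators) together with the observation that $U_t^i, W_t^i \in \mathcal F_{t+1}^i$. You are slightly more explicit than the paper in writing out the kernel randomization functions $f_{\pi_t}, g_P$ and in spelling out the triangle-inequality bookkeeping, but the argument is the same.
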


\begin{proof}
    The statement is verified by induction. For $t=0$ and arbitrary $i \in V(\mathcal G),$ $\mathbf x_0 ^i$ is measurable with respect to $\mathcal F_{0} ^i$ by definition.
    For the induction step, assume $\mathbf x_{t}^v$ is measurable with respect to $\mathcal F^{v} _{t}$ for every node $v$ in the graph.
    By locality of the dynamics, $\mathbf x_{t+1}^i$ is a measurable function of the $1$--hop neighborhood states $\{\mathbf x_t^v : v \in (\mathcal G, i)_1\}$ together with the random variables $U_t^i$ and $W_t^i$.
    Since for each $v \in (\mathcal G,i)_1$ we have $\Vert v-i\Vert \le 1$, the sigma-algebra generated by $\bigcup_{v\in(\mathcal G,i)_1}\mathcal F_t^v$ is contained in $\mathcal F_{t+1}^i$.
    Moreover, $U_t^i$ and $W_t^i$ are contained in $\mathcal F_{t+1}^i$ by definition. Hence $\mathbf x_{t+1}^i$ is measurable with respect to $\mathcal F_{t+1}^i$, which closes the induction.
\end{proof}

We allow the underlying graph structure to be chosen randomly. In order to obtain full information on the graph distribution, we not only need information about the current states of the agents, but also about the graph structure. We therefore introduce the $\sigma$-algebra
$$\mathcal K ^ \rho _ k := \sigma \left( \mathcal F_k ^\rho , \mathcal J \right),$$
where $\mathcal J$ is the canonical $\sigma$-algebra on the space of graphs with at most countably many nodes.

\begin{proof}[Proof of Theorem \ref{thm:locality}]
    Recall the definition of $V_t$ from \eqref{equ:value_fct_def}. Fix $t\in\{0,\dots,T-1\}$ and set $k:=T-t$.
    By Lemma \ref{lemma:action_is_local}, the random variables $( (\mathcal G,\rho),\mathbf x_s )_1$ for $s=t,t+1,\dots,T$ are measurable with respect to $\mathcal K _k ^\rho$ (since information propagates at unit speed and $k=T-t$).
    Since instantaneous rewards depend only on the $1$--hop neighborhood, the reward-to-go from time $t$ to $T-1$ (and hence the value $V_t$) is $\mathcal K_k^\rho$-measurable. This shows that $V_t$ depends only on the $k$-neighborhood information, i.e.\ on the local neighborhood.
\end{proof}

\section{Proof of Theorem \ref{thm:dpp}}
\label{app:proof_dpp}
Recall that the Bellman equation simply states that the value of a state $\boldsymbol \mu$ is simply given as the maximum of the sum of the immediate best reward and the best possible future value.
In order to prove \eqref{equ:dpp} we follow the ideas from \cite{gu2023dynamic}. For any action kernel $h$ and sequence of controls $\pi = (\pi_t)_{t \le T}$, define the IQ function
$$
Q_t (\boldsymbol{\mu},h) := \sup\limits_{\pi \in \Pi_1} Q_t^\pi (\boldsymbol \mu,h).
$$
where
$$
Q_t^\pi (\boldsymbol \mu,h) := \mathbb E ^\pi \left[ \sum\limits_{\ell=t} ^{T-1} \gamma^{\ell-t} r(  ((\mathcal G,\rho), \mathbf x_\ell )_1, u_\ell ) \right],
$$
with $0 < \gamma \le 1$. We abbreviate $Q (\boldsymbol{\mu},h) = Q_0(\boldsymbol{\mu},h)$.
The expectation on the rhs is taken according to $((\mathcal G,\rho), \mathbf x_t)_1 \sim \boldsymbol \mu$ and $u_t \sim h$.
\begin{proof}[Proof of Theorem \ref{thm:dpp}]
    In order to simplify notation we will abbreviate a state of the $1$--hop neighborhood of $\rho$ by $s$. Similarly, we  abbreviate $s_t = ((\mathcal G,\rho), \mathbf x_t)_1$. Define the (average) immediate reward as
    $$
    \hat{r}(\boldsymbol \mu,h) :=  \int \boldsymbol \mu(\mathrm d s) \int h(\mathrm d u \mid s) r(s,u).
    $$
    By definition we have with $\pi^{1:} := (\pi_{t})_{1 \le t \le T}$ that
    $$
    V_0 ^{\pi} (\boldsymbol \mu) = \hat{r}(\boldsymbol \mu,\pi_0) +  \gamma V^{\pi^{1:}} _{1}(T_1\boldsymbol \mu)
    $$
    holds, and similarly for $t>0$. Recall that
    $$
    V (\boldsymbol \mu) = \sup_{\pi \in \Pi_1} V_0 ^{\pi}(\boldsymbol \mu)
    $$
    is defined to be the value of state $\boldsymbol \mu$. We claim
    \begin{equation}
        \label{equ:state_value_identity}
        V_t( \boldsymbol \mu) = \sup\limits_{h \in \mathcal H} Q_t(\boldsymbol \mu,h)
    \end{equation}
    for any state distribution $\boldsymbol \mu$. It suffices to show the claim for $t=0$.
    To show ``$\le$``, note that by definition
    $$
    V_0 ^{\pi} (\boldsymbol \mu) = Q_0^{\pi^{1:}}(\boldsymbol \mu,\pi_0 ),
    $$
    which immediately shows this inequality by taking the sup over all policies and next--step policies and then taking the sup over the policies on the left--hand side.
    For the other direction, fix any start distribution $\pi_0$ and choose $\pi^\epsilon$ such that
    $$
    Q_0^{\pi^\epsilon}(\boldsymbol \mu,\pi_0) \ge Q_0(\boldsymbol \mu,\pi_0)-\epsilon.
    $$
    The existence of $\pi^\epsilon$ is guaranteed by the definition of a supremum. Defining the ``glued policy''
    $$
    \Tilde{ \pi} := (\pi_0,\pi^\epsilon_1,...,\pi^\epsilon _{T-1})
    $$
    and inserting yields the inequality
    $$
    V(\boldsymbol \mu) \ge V_0 ^{\Tilde \pi}(\boldsymbol \mu) = Q_0^{\pi^\epsilon}(\boldsymbol \mu,\pi_0) \ge Q_0(\boldsymbol \mu,\pi_0)-\epsilon.
    $$
    Since $\epsilon > 0$ was arbitrary, equation \eqref{equ:state_value_identity} is shown. The DPP follows now from the calculation
    \begin{equation}
        \nonumber
        \begin{split}
            Q(\boldsymbol \mu,h) &= \sup\limits_{\pi \in \Pi_1} \hat{r}(\boldsymbol \mu,h) + \gamma V^{\pi} _{1}( T_1 \boldsymbol \mu)\\
            &= \hat{r}(\boldsymbol \mu,h) + \gamma \sup\limits_{h' \in \mathcal H} Q_1( T_1 \boldsymbol \mu,h')
        \end{split}
    \end{equation}
and writing out definitions. 
\end{proof}

\section{Proof of Theorem \ref{thm:policy-gradient}}
\label{app:proof_pg}

We provide a detailed proof for the Policy Gradient Theorem applied to the Sparse Mean-Field Control system.

The problem is a Markov Decision Process (MDP) defined by the tuple $(\mathcal{S}, \Psi, \mathcal{T}, R)$, where:
\begin{itemize}
    \item State space $\mathcal{S} = \mathcal{P}(\mathcal{G}_*^{\mathcal{X}})$ (measures over decorated neighborhoods).
    \item Action space $\Psi$ (parameters for the local policy).
    \item Dynamics $\boldsymbol{\mu}_{t+1} = \mathcal{T}(\boldsymbol{\mu}_t, \psi_t)$. Note that given $\boldsymbol{\mu}$ and $\psi$, the next state is deterministic.
    \item Reward $R(\boldsymbol{\mu}, \psi)$.
\end{itemize}
The agent (meta-policy) $\hat{\pi}_\theta(\psi \mid \boldsymbol{\mu})$ is stochastic.
Let $V_t(\boldsymbol{\mu})$ be the value function at time $t$ under policy $\hat{\pi}_\theta$, as
\[ V_t(\boldsymbol{\mu}) = \int_{\Psi} \hat{\pi}_\theta(\psi \mid \boldsymbol{\mu}) \left[ R(\boldsymbol{\mu}, \psi) + V_{t+1}(\mathcal{T}(\boldsymbol{\mu}, \psi)) \right] \mathrm{d}\psi. \]
We aim to find $\nabla_\theta V_0(\boldsymbol{\mu}_0)$. Differentiating $V_t$ with respect to $\theta$ gives
\begin{align}
    \nabla_\theta V_t(\boldsymbol{\mu}) &= \nabla_\theta \int_{\Psi} \hat{\pi}_\theta(\psi \mid \boldsymbol{\mu}) Q_t(\boldsymbol{\mu}, \psi) \mathrm{d}\psi \nonumber \\
    &= \int_{\Psi} \Big[ \nabla_\theta \hat{\pi}_\theta(\psi \mid \boldsymbol{\mu}) Q_t(\boldsymbol{\mu}, \psi) \nonumber + \hat{\pi}_\theta(\psi \mid \boldsymbol{\mu}) \nabla_\theta Q_t(\boldsymbol{\mu}, \psi) \Big] \mathrm{d}\psi. \label{eq:pg_expansion}
\end{align}
Note that this differentiation step is standard in the Policy Gradient literature (e.g., Sutton et al., 1999).
Using the log-derivative trick $\nabla \hat{\pi} = \hat{\pi} \nabla \log \hat{\pi}$,
\begin{align}
    \text{Term 1} &= \mathbb{E}_{\psi \sim \hat{\pi}_\theta} \left[ \nabla_\theta \log \hat{\pi}_\theta(\psi \mid \boldsymbol{\mu}) Q_t(\boldsymbol{\mu}, \psi) \right].
\end{align}

Now consider \textbf{Term 2}, which accounts for the dependency of the future value on $\theta$. Since $R$ and $\mathcal{T}$ do not depend directly on $\theta$ (only via $\psi$), we have
\begin{align}
    \nabla_\theta Q_t(\boldsymbol{\mu}, \psi) &= \nabla_\theta \left[ R(\boldsymbol{\mu}, \psi) + V_{t+1}(\mathcal{T}(\boldsymbol{\mu}, \psi)) \right] \nonumber \\
    &= \nabla_\theta V_{t+1}(\boldsymbol{\mu}_{t+1}) \quad \text{where } \boldsymbol{\mu}_{t+1} = \mathcal{T}(\boldsymbol{\mu}, \psi).
\end{align}
Substituting back into Eq. \eqref{eq:pg_expansion}, we obtain
\begin{align}
    \nabla_\theta V_t(\boldsymbol{\mu}_t) &= \mathbb{E}_{\psi_t} \left[ \nabla_\theta \log \hat{\pi}_\theta(\psi_t \mid \boldsymbol{\mu}_t) Q_t(\boldsymbol{\mu}_t, \psi_t) \right] \nonumber \\
    &\quad + \mathbb{E}_{\psi_t} \left[ \nabla_\theta V_{t+1}(\boldsymbol{\mu}_{t+1}) \right].
\end{align}
This is a recursive relationship. The differentiation of the dynamics is implicitly contained in the term $\nabla_\theta V_{t+1}$. By unrolling this recurrence from $t=0$ to $T-1$, we have
\begin{align}
    \nabla_\theta J(\theta) &= \nabla_\theta V_0(\boldsymbol{\mu}_0) \nonumber \\
    &= \mathbb{E}_{\psi_0} [ \nabla \log \hat{\pi}_0 \cdot Q_0 + \nabla V_1 ] \nonumber \\
    &= \mathbb{E}_{\psi_0} [ \nabla \log \hat{\pi}_0 \cdot Q_0 + \mathbb{E}_{\psi_1} [ \nabla \log \hat{\pi}_1 \cdot Q_1 + \nabla V_2 ] ] \nonumber \\
    &= \sum_{t=0}^{T-1} \mathbb{E}_{\tau} \left[ \nabla_\theta \log \hat{\pi}_\theta(\psi_t \mid \boldsymbol{\mu}_t) Q_t(\boldsymbol{\mu}_t, \psi_t) \right].
\end{align}
The term $\nabla V_T$ vanishes because $V_T(\cdot) = 0$.

\subsection{Proof of Theorem \ref{thm:ctde}}
\label{app:ctde}

\begin{proof}
Let $\boldsymbol{\mu}^N_t$ be the empirical measure state on graph $G_N$, and $\boldsymbol{\mu}_t$ be the limiting measure.
Let the gradient on the finite system be
\[ g^N(\theta) = \sum_{t=0}^{T-1} \mathbb{E} \left[ Q^N_t(\boldsymbol{\mu}^N_t, \psi_t) \nabla_\theta \log \hat{\pi}_\theta(\psi_t \mid \boldsymbol{\mu}^N_t) \right]. \]
Let the gradient on the limit system be
\[ g^\infty(\theta) = \sum_{t=0}^{T-1} \mathbb{E} \left[ Q^\infty_t(\boldsymbol{\mu}_t, \psi_t) \nabla_\theta \log \hat{\pi}_\theta(\psi_t \mid \boldsymbol{\mu}_t) \right]. \]
We assume $\hat{\pi}_\theta$ is Lipschitz continuous and bounded, and $R$ is bounded.

We aim to show $\|g^N - g^\infty\| \to 0$. We decompose the error into three parts via the triangle inequality:

First, under the assumption that the transition dynamics $T$ are continuous with respect to the weak topology and the graph sequence $G_N$ converges in the Benjamini-Schramm sense, we have $\boldsymbol{\mu}^N_t \to \boldsymbol{\mu}_t$ weakly in probability. This follows from the continuous mapping theorem applied to the recursive dynamics.
    
Second, since the reward $r$ is bounded and continuous on decorated neighborhoods, the value function $Q^N_t$ converges to $Q^\infty_t$. Specifically, for any compact set of policies, $Q^N_t(\mu, \psi) \to Q^\infty_t(\mu, \psi)$ uniformly.
    
Third, by Assumption \ref{ass:policy-domination}, $\nabla \log \hat{\pi}_\theta(\psi \mid \mu)$ is continuous in $\mu$.

Now, consider the difference for a fixed time step $t$,
\begin{align*}
    \Delta_t &= \left\| \mathbb{E} [ Q^N_t(\boldsymbol{\mu}^N_t) \nabla \log \pi(\boldsymbol{\mu}^N_t) ] - \mathbb{E} [ Q^\infty_t(\boldsymbol{\mu}_t) \nabla \log \pi(\boldsymbol{\mu}_t) ] \right\| \\
    &\le \underbrace{\left\| \mathbb{E} [ (Q^N_t(\boldsymbol{\mu}^N_t) - Q^\infty_t(\boldsymbol{\mu}^N_t)) \nabla \log \pi(\boldsymbol{\mu}^N_t) ] \right\|}_{\text{(I)}} + \underbrace{\left\| \mathbb{E} [ Q^\infty_t(\boldsymbol{\mu}^N_t) \nabla \log \pi(\boldsymbol{\mu}^N_t) - Q^\infty_t(\boldsymbol{\mu}_t) \nabla \log \pi(\boldsymbol{\mu}_t) ] \right\|}_{\text{(II)}}.
\end{align*}
Term (I): Since $Q^N \to Q^\infty$ uniformly and $\nabla \log \pi$ is bounded, this term vanishes as $N \to \infty$.

Term (II): Let $f(\mu, \psi) = Q^\infty_t(\mu, \psi) \nabla \log \hat{\pi}(\psi \mid \mu)$. This function is bounded and continuous. Since $\boldsymbol{\mu}^N_t \to \boldsymbol{\mu}_t$ weakly, by the definition of weak convergence of measures (and the Portmanteau theorem), $\mathbb{E}[f(\boldsymbol{\mu}^N_t)] \to \mathbb{E}[f(\boldsymbol{\mu}_t)]$.

Thus, $\lim_{N \to \infty} \|g^N - g^\infty\| = 0$. Optimizing the finite graph objective asymptotically optimizes the mean-field objective.
\end{proof}

\section{Proof of Theorem \ref{thm:truncation}}

In the following we provide a proof for Theorem \ref{thm:truncation}. In order to do this, some assumptions are necessary.
\begin{assumption} 
\label{ass:error_propagation_gnn}
    \leavevmode\par
    \begin{itemize}
        \item There exists $\Delta \in \mathbb N$ such that  $\deg (\rho) \le \Delta$ for all $\rho \in \mathcal G$;
        \item  Fix any vertex $\rho  \in \mathcal G$. Let $\mathbf x$ and $\mathbf x'$ be two graph decorations that differ at most by on one node in the $1$-hop neighborhood around $\rho$. I.e.,
        $$ |\{v \in (\mathcal G,\rho)_1  : \mathbf{x}^v \neq (\mathbf{x}')^v \}| \le 1;$$
        Then uniformly in $u \in \mathcal U$,
        $$\left\Vert P \big(\cdot | ( (\mathcal G,\rho), \mathbf{x})_1 ,u \big) - P \big(\cdot | ( (\mathcal G,\rho), \mathbf{x}')_1 ,u \big)  \right\Vert_{\mathrm{TV}} \le \alpha$$
        for $\alpha <1$;
        \item Assume $\alpha\Delta^2 < 1$.
    \end{itemize}
\end{assumption}
In order to avoid confusion with the mean-field value function $V_t(\boldsymbol{\mu})$, we denote the value function for a specific local state $S$ by (the slightly untypical) $W_t(S)$.
\begin{definition}[Local Value Function]
    Let $S = ((\mathcal{G}, \rho), \mathbf{x}) \in \mathcal{G}_*^{\mathcal{X}}$ be a decorated rooted graph state at time $t$. For a policy $\pi$, the local value function $W_t^\pi(S)$ is the expected cumulative reward collected by the agent at the root $\rho$:
    \[
    W_t^\pi(S) := \mathbb{E} \left[ \sum_{s=t}^{T-1} r\big( ((\mathcal{G}, \rho), \mathbf{x}_s)_1, u_s^\rho \big) \;\bigg|\; ((\mathcal{G}, \rho), \mathbf{x}_t) = S \right].
    \]
    The expectation is taken over the action sequence $u_s^\rho \sim \pi_s$ and the state dynamics $\mathbf{x}_{s+1} \sim P$, holding the graph topology $(\mathcal{G}, \rho)$ fixed. Let $W_t^*(S) = \sup_\pi W_t^\pi(S)$ be the optimal local value. We define similarly the local $Q$-function and the optimal (local) action value $Q^*$.
\end{definition}
Note that in the previous display we set the discount factor $\gamma=1$ for simplicity. Next, given a fixed rooted graph $(\mathcal G, \rho)$, we define the weighted distance by
\[
d_{\lambda}(S, S') := \sum_{v \in V(G): \mathbf{x}(v) \neq \mathbf{x}'(v)} \lambda^{\Vert \rho - v \Vert}.
\]
\begin{proposition}
\label{prop:value_stability}
Let $(\mathcal G, \rho)$ be a fixed rooted graph. Let $S = ((\mathcal G, \rho), \mathbf{x})$ and $S' = ((\mathcal G, \rho), \mathbf{x}')$ be two states with possibly different node decorations $\mathbf{x}, \mathbf{x}'$.
Let $\lambda := \alpha \Delta < 1$ by Assumption \ref{ass:error_propagation_gnn}. 
Then, the optimal local value function $W_t^*(S)$ satisfies:
\[
|W_t^*(S) - W_t^*(S')| \le L_t \, d_{\lambda}(S, S'),
\]
where $L_t < \infty$ is a constant depending on $T$ but is uniform in $(\mathcal G, \rho)$.
\end{proposition}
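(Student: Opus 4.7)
My approach is a backward induction on $t$ that turns the comparison of optimal values into an expected weighted-distance comparison between two synchronously coupled sparse dynamics on the common graph $(\mathcal G, \rho)$, one starting from $S$ and the other from $S'$. The base case $t = T$ is trivial ($W_T^\ast \equiv 0$, so $L_T = 0$); the induction step combines a one-step distance-expansion estimate with the Bellman recursion.

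First, I would reduce the optimal-value comparison to a uniform-in-policy bound by the standard argument $|W_t^\ast(S) - W_t^\ast(S')| \le \sup_\pi |W_t^\pi(S) - W_t^\pi(S')|$, obtained by evaluating each optimum against the other state's optimizer. Then, invoking Proposition~\ref{prop:kernel-randomization}, both trajectories are realized on a common probability space driven by shared uniform variables $\{U_t^v, W_t^v\}$ that deterministically implement the policy sampling and state transitions at every node. Writing $D_t := \{v : \mathbf x_t^v \neq (\mathbf x_t')^v\}$, any node $w$ whose full $1$-hop decoration agrees in both trajectories at time $t$ inherits the same action and the same next state; when $j$ nodes of its $1$-hop differ, a hybrid/chaining argument along single-node edits, combined with Assumption~\ref{ass:error_propagation_gnn}, yields $\Pr[w \in D_{t+1}] \le j\alpha$.

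The central estimate is the weighted one-step bound, obtained by linearity and a union bound over sources:
\begin{equation*}
\mathbb E\bigl[d_\lambda(S_{t+1}, S_{t+1}')\bigr] \;\le\; \alpha \sum_{v \in D_t} \sum_{w : \|w - v\| \le 1} \lambda^{\|\rho - w\|} \;\le\; \alpha (\Delta + 1) \lambda^{-1}\, d_\lambda(S, S'),
\end{equation*}
where the second inequality uses $|\{w : \|w - v\| \le 1\}| \le \Delta + 1$ and $\|\rho - w\| \ge \|\rho - v\| - 1$ by the triangle inequality. Substituting $\lambda = \alpha\Delta$ reduces the expansion factor to $1 + 1/\Delta$, uniformly in the graph. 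The assumption $\alpha\Delta^2 < 1$ enters by securing $\lambda < 1$, which is what makes $d_\lambda$ meaningful for downstream use in Theorem~\ref{thm:truncation}.

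Combining with the Bellman equation and an immediate-reward estimate $|r(S) - r(S')| \le C_r\, d_\lambda(S, S')$ (any decoration change affecting the root reward lies in the $1$-hop of $\rho$ and thus contributes at least $\lambda$ to $d_\lambda$, while $r$ is bounded and continuous), the inductive step yields the recursion $L_t := C_r + (1 + 1/\Delta)\,L_{t+1}$, which is finite and independent of $(\mathcal G, \rho)$ since $T < \infty$. I expect the main obstacle to be coupling the stochastic policy consistently with the per-kernel TV bound: when the $1$-hop around $w$ differs, the action sampled via the shared $U_t^w$ may also differ, so the $\le j\alpha$ bound must absorb the policy-induced discrepancy, for instance by replacing Assumption~\ref{ass:error_propagation_gnn} with a TV bound on the effective policy-integrated kernel $\tilde P(\cdot \mid s) := \int \pi(\mathrm du \mid s)\, P(\cdot \mid s, u)$, or by restricting the supremum to a class of policies with a uniform modulus of continuity in the local state. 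Once this is handled, the remaining combinatorial and geometric accounting is routine.
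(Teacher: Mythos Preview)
Your proof follows the same backbone as the paper's: backward induction on $t$, a one-step expected $d_\lambda$-expansion estimate obtained by swapping sums and invoking Assumption~\ref{ass:error_propagation_gnn} through a single-node hybrid argument, and a linear recursion for $L_t$. The one substantive difference is the handling of the supremum, and it is exactly what generates the obstacle you flag. You reduce to $\sup_\pi |W_t^\pi(S)-W_t^\pi(S')|$ and couple under a fixed \emph{policy}, so a node whose $1$-hop differs may sample a different action and Assumption~\ref{ass:error_propagation_gnn} (which fixes $u$) does not apply directly. The paper sidesteps this entirely by writing the Bellman equation for $W_t^*$ as a supremum over \emph{actions} and using $|\sup_u f(u)-\sup_u g(u)|\le\sup_u|f(u)-g(u)|$: with the same action fixed on both sides at every node, the TV bound applies verbatim and neither a policy-integrated kernel nor a continuity restriction on $\pi$ is needed. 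The paper also obtains the slightly tighter one-step expansion factor $\alpha\Delta\lambda^{-1}=1$ (it sums over strict neighbors $v\sim w$ rather than the closed $1$-ball), giving the recursion $L_t=C_r/\lambda+L_{t+1}$; your $(\Delta+1)$-count is arguably the more careful accounting, but either choice yields a finite, graph-uniform $L_t$.
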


\begin{proof}
We proceed by backward induction on $t$. For $t=T$,
 the inequality holds trivially.
Next, assume the bound holds for the function $W_{t+1}^*$. That is, we assume $W_{t+1}^*$ is Lipschitz continuous with constant $L_{t+1}$ 
The Bellman equation for $W_t^*$ reads
\[
W_t^*(S) = \sup_{u \in \mathcal{U}} \left\{ r((S)_1, u) + \mathbb{E}[W_{t+1}^*(S_{\mathrm{next}}) \mid S, u] \right\}.
\]
Fix states $S, S'$. Let $D = \{ v \in V(\mathcal G) : \mathbf{x}^v \neq (\mathbf{x}')^v \}$ be the (deterministic) set of disagreeing nodes. Using $|\sup f - \sup g| \le \sup |f - g|$, we have the estimate
\begin{align*}
    |W_t^*(S) - W_t^*(S')| &\le \sup_{u \in \mathcal{U}} \bigg( \underbrace{|r((S)_1, u) - r((S')_1, u)|}_{\textbf{(I)}} \\
    &\quad + \underbrace{\left| \mathbb{E}[W_{t+1}^*(S_{\mathrm{next}}) \mid S, u] - \mathbb{E}[W_{t+1}^*(S'_{\mathrm{next}}) \mid S', u] \right|}_{\textbf{(II)}} \bigg).
\end{align*}
We now bound both terms separately, starting with $\mathbf{(I)}$. Since $r$ is local (radius 1) and bounded, $|r(S) - r(S')| \le C_r \sum_{v \in D \cap B(\rho, 1)} 1$. In particular, $1 = \lambda^{-1} \lambda^{\Vert \rho - v \Vert}$ for $\rho,v$ being neighbors. Thus 
$$\text{(I)} \le (C_r/\lambda) d_{\lambda}(S, S').$$

For $\mathbf{(II)}$, we apply the induction hypothesis to every realization, which yields
\[
\text{(II)} \le \mathbb{E} \left[ |W_{t+1}^*(S_{\mathrm{next}}) - W_{t+1}^*(S'_{\mathrm{next}})| \right] \le L_{t+1} \mathbb{E} \left[ d_{\lambda}(S_{\mathrm{\mathrm{next}}}, S'_{\mathrm{next}}) \right].
\]
Now, expand the expectation of the distance. Let $\mathbf{x}_{\mathrm{next}}$ and $\mathbf{x}'_{\mathrm{next}}$ be the random decorations at time $t+1$. By Assumption \ref{ass:error_propagation_gnn}, for any node $w$ it follows from the triangle inequality that
\[
\mathbb{P}\left(\mathbf{x}_{\mathrm{next}}^w \neq (\mathbf{x}'_{\mathrm{next}})^w \right) \le \alpha \sum_{v \sim w} \mathbf{1}_{v \in D}.
\]
By substituting this probability bound into the expectation one obtains
\begin{align*}
    \mathbb{E}[d_{\lambda}] &= \sum_{w \in V(\mathcal G)} \lambda^{\Vert \rho - w \Vert} \mathbb{P}(\mathbf{x}_{\mathrm{next}}^w \neq \left(\mathbf{x}'_{\mathrm{next}})^w \right) \\
    &\le \sum_{w} \lambda^{\Vert \rho - w \Vert} \left( \alpha \sum_{v \sim w} \mathbf{1}_{v \in D} \right).
\end{align*}
Swapping sums thus gives the estimate
\[
\mathbb{E}[d_{\lambda}]\le \alpha \sum_{v \in D} \sum_{w \sim v} \lambda^{\Vert \rho - w \Vert}.
\]
Using the triangle inequality $\Vert \rho - w \Vert \ge \Vert \rho - v \Vert - 1$, we have (as $\lambda <1$) $\lambda^{\Vert \rho - w \Vert} \le \lambda^{-1} \lambda^{\Vert \rho - v \Vert}$. With max degree $\Delta$ this means that
\[
\mathbb{E}[d_{\lambda}]\le \alpha \Delta \lambda^{-1} \sum_{v \in D} \lambda^{\Vert \rho - v \Vert}.
\]
Since $\lambda = \alpha \Delta$, we have $\alpha \Delta \lambda^{-1} = 1$. Thus $\text{(II)} \le L_{t+1} d_{\lambda}(S, S')$.
Combining terms gives the recursion $L_t = C_r/\lambda + L_{t+1}$. Since the horizon $T$ is finite, $L_t < \infty$ and the claim is shown.
\end{proof}
\begin{lemma}
\label{lem:perf_diff}
    For any two Markov kernels $\pi, \tilde{\pi}$ and any finite horizon $T \in \mathbb N$ we have
    \[
    J(\pi) - J(\tilde{\pi}) = \sum_{t=0}^{T-1} \mathbb{E}_{\tau \sim \tilde{\pi}} \left[ \mathbb{E}_{u \sim \pi(\cdot|S_t)}[Q_t^\pi(S_t, u)] - \mathbb{E}_{\tilde{u} \sim \tilde{\pi}(\cdot|S_t)}[Q_t^\pi(S_t, \tilde{u})] \right].
    \]
\end{lemma}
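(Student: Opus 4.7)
The plan is to adapt the classical Performance Difference Lemma (in the Kakade--Langford style) to the sparse-MFC setting, where the ``state'' $S_t$ denotes a decorated rooted neighborhood. The proof will be a telescoping-sum argument: I evaluate $V_0^\pi$ along a trajectory sampled under $\tilde\pi$, decompose it into single-step increments, and rewrite each increment using the Bellman equation for $\pi$.

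First, I would note that the initial-state distribution is a fixed datum of the problem and does not depend on the policy, so $J(\pi)=\mathbb{E}_{\tau\sim\tilde\pi}[V_0^\pi(S_0)]$. Using the terminal condition $V_T^\pi\equiv 0$, I would then insert the pathwise telescope
\[
V_0^\pi(S_0)=\sum_{t=0}^{T-1}\bigl(V_t^\pi(S_t)-V_{t+1}^\pi(S_{t+1})\bigr)
\]
inside the expectation over the $\tilde\pi$-trajectory.

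Second, I would combine two identities: the Bellman equation $V_t^\pi(S_t)=\mathbb{E}_{u\sim\pi(\cdot\mid S_t)}[Q_t^\pi(S_t,u)]$ and, along the $\tilde\pi$-trajectory, $\mathbb{E}[V_{t+1}^\pi(S_{t+1})\mid S_t,\tilde u_t]=Q_t^\pi(S_t,\tilde u_t)-r(S_t,\tilde u_t)$. Substituting both into the telescoped expectation and subtracting $J(\tilde\pi)=\mathbb{E}_{\tau\sim\tilde\pi}[\sum_{t} r(S_t,\tilde u_t)]$ makes the immediate-reward terms cancel and leaves precisely $\sum_{t=0}^{T-1}\mathbb{E}_{\tau\sim\tilde\pi}[\mathbb{E}_{u\sim\pi(\cdot\mid S_t)}[Q_t^\pi(S_t,u)]-Q_t^\pi(S_t,\tilde u_t)]$. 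The last cosmetic step is to absorb the already-present sampling of $\tilde u_t$ under $\tilde\pi$ into the more symmetric form $\mathbb{E}_{\tilde u\sim\tilde\pi(\cdot\mid S_t)}[Q_t^\pi(S_t,\tilde u)]$, which is immediate.

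The main obstacle will not be conceptual difficulty but measure-theoretic bookkeeping: one must verify that $V_t^\pi$ and $Q_t^\pi$ are well-defined measurable functions on the standard Borel space of decorated rooted neighborhoods (which follows from the locality of the dynamics and boundedness of $r$) and that the Markov property at the level of $S_t$ is available, so that the tower-property step is legitimate. Crucially, every expectation in the argument is taken under the single law $\tilde\pi$, so no change of measure or coupling between the $\pi$- and $\tilde\pi$-induced trajectory laws is required; this keeps the argument essentially algebraic and avoids the sparse-graph-specific difficulties that arise in, e.g., Theorem~\ref{thm:truncation}.
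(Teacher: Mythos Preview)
Your proposal is correct and is essentially the same telescoping argument the paper gives: both use the Bellman identity for $Q_t^\pi$, telescope $V_t^\pi(S_t)-V_{t+1}^\pi(S_{t+1})$ along a $\tilde\pi$-trajectory, and identify the summed rewards as $J(\tilde\pi)$. The only cosmetic differences are that the paper writes $W_t^\pi$ for the local value function and organizes the computation by first defining the advantage $A_t$ and then summing, whereas you start from $J(\pi)=\mathbb{E}_{\tau\sim\tilde\pi}[V_0^\pi(S_0)]$ and telescope forward.
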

\begin{proof}
    Recall that $W_t^\pi(S) = \mathbb{E}_{u \sim \pi(\cdot|S)}[Q_t^\pi(S, u)]$ is the value function of state $S$ with policy $\pi$. 
    By definition, the Q-function satisfies the Bellman recursion
    \[
    Q_t^\pi(S, \tilde{u}) = r(S, \tilde{u}) + \mathbb{E}_{S' \sim P(\cdot|S, \tilde{u})}[V_{t+1}^\pi(S')],
    \]
    with $W_T^\pi(S) = 0$.
    Consider the term inside the summation on the RHS. Let $\tau \sim \tilde{\pi}$ denote a trajectory generated by $\tilde{\pi}$. At time $t$ we define the advantage
    \begin{align*}
        A_t &:= \mathbb{E}_{u \sim \pi}[Q_t^\pi(S_t, u)] - \mathbb{E}_{\tilde{u} \sim \tilde{\pi}}[Q_t^\pi(S_t, \tilde{u})] \\
            &= W_t^\pi(S_t) - \mathbb{E}_{\tilde{u} \sim \tilde{\pi}}\left[ r(S_t, \tilde{u}) + \mathbb{E}_{S_{t+1} \sim P(\cdot \mid S_t, \Tilde u)}[W_{t+1}^\pi(S_{t+1})] \right].
    \end{align*}
    $A_t$ essentially measures the ``loss`` one gets when running $\Tilde{\pi}$ instead of $\pi$ for a single time step.
    Sample $S_t$ according to the dynamics with respect to $\Tilde{\pi}$, which yields
    \[
    \mathbb{E}_{\tau \sim \tilde{\pi}}[A_t] = \mathbb{E}_{\tau \sim \tilde{\pi}}[W_t^\pi(S_t)] - \mathbb{E}_{\tau \sim \tilde{\pi}}[r(S_t, \tilde{u})] - \mathbb{E}_{\tau \sim \tilde{\pi}}[W_{t+1}^\pi(S_{t+1})].
    \]
    Summing this over $t=0$ to $T-1$, one therefore obtains
    \[
    \sum_{t=0}^{T-1} \mathbb{E}_{\tau \sim \tilde{\pi}}[A_t] = \sum_{t=0}^{T-1} \left( \mathbb{E}_{\tau \sim \tilde{\pi}}[W_t^\pi(S_t) - W_{t+1}^\pi(S_{t+1})] \right) - \underbrace{\sum_{t=0}^{T-1} \mathbb{E}_{\tau \sim \tilde{\pi}}[r(S_t, \tilde{u})]}_{J(\tilde{\pi})}.
    \]
    Note that the first sum telescopes.
    Since $W_T^\pi \equiv 0$ and $\mathbb{E}[W_0^\pi(S_0)] = J(\pi)$, we obtain
    \[
    \sum_{t=0}^{T-1} \mathbb{E}_{\tau \sim \tilde{\pi}}[A_t] = J(\pi) - J(\tilde{\pi}). \qedhere
    \]
\end{proof}

\begin{proof}[Proof of Theorem \ref{thm:truncation}]
    Let $\pi^*$ be the optimal policy.
    Let $\mathcal T _k (S)$ be the state that has every node decoration at distance $\Vert \rho - v \Vert > k$  set to some fixed state, say $0$. In the $k$-ball around $\rho$ we take $S = \mathcal T _K (S)$. Next, define the truncated policy $\tilde{\pi}$ by $\tilde{\pi}(\cdot|S) := \pi^*(\cdot|\mathcal{T}_k(S))$.
    Since $\pi^{(k)}$ denotes the optimal $k$-local policy, we have $J(\pi^*) - J(\pi^{(k)}) \le J(\pi^*) - J(\tilde{\pi})$.
    
    Our plan is to apply Lemma \ref{lem:perf_diff}. Let $S_t$ be the state at time $t$ and $S'_t = \mathcal{T}_k(S_t)$ be its truncation. The loss term at time $t$ is
    \[
    \Delta_t = \mathbb{E}_{u \sim \pi^*(\cdot|S_t)}[Q_t^*(S_t, u)] - \mathbb{E}_{\tilde{u} \sim \pi^*(\cdot|S'_t)}[Q_t^*(S_t, \tilde{u})],
    \]
    because $\pi^* (\cdot \mid S_t ') = \Tilde{\pi}(\cdot \mid S_t)$.
    Note that the first term is simply the optimal value function $V_t^*(S_t)$. The second term involves actions $\tilde{u}$ chosen according to the truncated state $S'_t$, but evaluated on the real state $S_t$.
    
    We add and subtract the optimal value of the truncated state, $W_t^*(S'_t) = \mathbb{E}_{\tilde{u} \sim \pi^*(\cdot|S'_t)}[Q_t^*(S'_t, \tilde{u})]$ so that we obtain
    \begin{align*}
        \Delta_t &= W_t^*(S_t) - W_t^*(S'_t) + W_t^*(S'_t) - \mathbb{E}_{\tilde{u} \sim \pi^*(\cdot|S'_t)}[Q_t^*(S_t, \tilde{u})] \\
        &= \underbrace{(W_t^*(S_t) - W_t^*(S'_t))}_{\text{(I)}} + \underbrace{\mathbb{E}_{\tilde{u} \sim \pi^*(\cdot|S'_t)} [Q_t^*(S'_t, \tilde{u}) - Q_t^*(S_t, \tilde{u})]}_{\text{(II)}}.
    \end{align*}
    
    We bound both terms separately.
    For the first term (I), observe that by the Lipschitz continuity of the optimal value function, we have
        \[
        |W_t^*(S_t) - W_t^*(S'_t)| \le L_Q d_\lambda(S_t, S'_t).
        \]
        For the second term (II), note that this is the expectation of the difference in Q-values for the \textit{same} action $\tilde{u}$ but different states. By the Lipschitz continuity of $Q^*$ (which follows from the Lipschitz estimate on $W^* _t$)
        \[
        |Q_t^*(S'_t, \tilde{u}) - Q_t^*(S_t, \tilde{u})| \le L_Q d_\lambda(S_t, S'_t).
        \]
        Taking the expectation over $\tilde{u}$ does not change this bound.
    Combining these estimates, we get $\Delta_t \le 2 L_Q d_\lambda(S_t, \mathcal{T}_k(S_t))$. Finally, we bound the distance. Since $S_t$ and $\mathcal{T}_k(S_t)$ agree on the $k$-hop neighborhood, disagreements occur only at distance $j > k$.
    \[
    d_{\lambda}(S_t, \mathcal{T}_k(S_t)) \le \sum_{j=k+1}^\infty \Delta^j \lambda^j = \frac{(\Delta \lambda)^{k+1}}{1 - \Delta \lambda}.
    \]
    Summing over $t=0 \dots T-1$ yields the desired bound
    \[
    J(\pi^*) - J(\pi^{(k)}) \le \sum_{t=0}^{T-1} 2 L_Q \frac{(\Delta \lambda)^{k+1}}{1 - \Delta \lambda} = O((\Delta \lambda)^k). \qedhere
    \]
\end{proof}

\section{Experimental Setup}
\label{app:exp}
\subsection{SIR Model Dynamics}
The experiments are conducted on a graph $\mathcal G=(V, E)$ where each node represents an agent. The system follows a discrete-time \textbf{SIR (Susceptible-Infected-Recovered)} process with interventions.
\begin{itemize}
    \item \textbf{States:} each node $i$ is in one of three states at time $t$: $S_i^t \in \{S, I, R\}$.
    \begin{itemize}
        \item $S$ (Susceptible): can be infected.
        \item $I$ (Infected): can infect neighbors.
        \item $R$ (Recovered/Removed): immune, cannot infect or be infected (includes vaccinated nodes).
    \end{itemize}
    \item \textbf{Infection Dynamics:} a susceptible node $i$ becomes infected at time $t+1$ with probability $P_{\mathrm{inf}}$. If node $i$ decides to \textit{isolate}, the susceptibility is reduced by a factor $f_{\mathrm{iso}} \in [0, 1]$,
    \[ P(S_i^{t+1} = I | S_i^t = S) = 1 - \prod_{j \in \mathcal{N}(i): S_j^t = I} \left( 1 - \beta \cdot (1 - 1_{a_i=2} \cdot f_{\mathrm{iso}}) \right) \]
    where $\beta$ is the base transmission probability per edge.
    \item \textbf{Recovery Dynamics:} an infected node recovers naturally with probability $\gamma$ at each step, meaning
    \[ P(S_i^{t+1} = R | S_i^t = I) = \gamma \]
    \item \textbf{Vaccination:} an explicit action ($a_i=1$) transitions a susceptible node directly to the recovered state ($S \to R$), granting immediate immunity.
\end{itemize}
The reward for every agent $i$ at time $t$ reads 
 \[
r_i^t = - \left[ 
    c_I \cdot 1(S_i^t = I) 
    + c_V \cdot 1(S_i^t \in \mathcal{V}) 
    + c_Q \cdot 1(a_i^t = \text{Isolate}) 
    + C_{\mathrm{global}} \cdot 1\left( \frac{1}{N} \sum_{j=1}^{N} 1(S_j^t = I) > \tau \right)
\right],
\]
\noindent where
\begin{itemize}
    \item $c_I, c_V, c_Q$ are the costs for infection, vaccination, and isolation respectively;
    \item $1(\cdot)$ is the indicator function;
    \item $\mathcal{V}$ denotes the set of vaccinated states (distinct from naturally recovered);
    \item $C_{\mathrm{global}}$ is the penalty applied when the global infection rate exceeds the threshold $\tau$.
\end{itemize}

\subsection{Setup}
Both our experiments utilize the architecture that is described in Figure \ref{fig:architecture}. A GNN (parameterized by $\phi$) predicts the means $\mu$ of a policy parameter distribution based on the global state $S_{\mathrm{global}}$. The actual policy parameters $\theta$ for an episode are sampled from a Gaussian distribution centered at $\mu = \mathrm{GNN}(S_{\mathrm{global}})$ with a fixed standard deviation ($\sigma=3.0$), meaning
\[ \theta \sim \mathcal{N}(\mathrm{GNN}(S_{\mathrm{global}}); \phi), \sigma^2 I). \]
The local agents operate using a decentralized policy parameterized by $\theta$. The policy then maps a discrete local observation state $s_{\mathrm{local}}$ to action probabilities. The state space is characterized as follows.
The local state space is defined by the agent's own state $s_i \in \{S, I, R\}$ and binary flags indicating the presence of neighbors in each state. To decrease the dimensions of the learning problem, the policy is restricted to detect only the presence of \textit{infected} neighbors, having no information about $S$ and $R$ neighbors. This effectively reduces the decision space to $3 \times 2 = 6$ functional states: $\{S, I, R\} \times \{0,1\}$. All experiments were run on a Macbook Pro ($2024$) and across $10$ random seeds. We use NetworkX \cite{hagberg2008networkx} to generate our graphs and the Adam optimizer \cite{kingma2014adam} to optimize the learnable parameters. Everything is implemented in PyTorch \cite{paszke2019pytorch}.

\subsection*{Experiment 1: Graph-Lifted State Space}
\noindent This setup focuses on controlling an outbreak on a lattice using isolation/vaccination. 
As graph we chose a $20 \times 20$ grid, meaning $N=400$ nodes.
The environment parameters are chosen so that the infection would spread supercritically in case no controller is present. In particular, we choose the (average) cost of being infected greater than the cost of vaccination ($c_v = 1, c_I = 0.5, c_Q = 0.2$). By choosing the recovery rate to be suitably low ($\gamma=0.2$) and the infection probability suitably high ($\beta = 0.25$), the optimal policy is to immediately stop the infection from the get go and not hope for it to die out on its own. 

We found that using the aggregated mean-field (i.e. $k>0$ message passes for the GNN encoder) does not improve performance for the model. This is to be expected; the optimal policy can be inferred from local knowledge alone.

\subsection*{Experiment 2: Sensitivity to Initial Conditions}
\noindent This setup tests the ability to manage infection in a graph of disjoint cliques under a hard vaccination budget constraint.
The graph topology is now $N=20$ disjoint triangles.
For simplicity we set $\beta=1.0$, $\gamma=0.0$, meaning no natural recovery and guaranteed infection spread. We now start in one of two initial scenarios.
    \begin{itemize}
        \item \textit{Scattered:} 5 infected nodes, each in a distinct triangle.
        \item \textit{Concentrated:} 5 infected nodes concentrated in 4 triangles.
    \end{itemize}
While vaccination is now very expensive ($c_v = 50$) and being in state $I$ is cheap ($c_I = 0$, again choosing $c_Q = 0.2$), a global penalty is applied if more than $25$ percent of all triangles have at least two infected. The meta policy has thus to distinguish the scattered state and the concentrated state: in the scattered state, vaccination must happen immediately for all nodes with infected neighbors; otherwise, the global (severe) punishment is applied. In the concentrated state, however, the global punishment threshold can never be achieved, meaning that it is cheaper to let agents become infected in the local triangles rather than to vaccinate. Here, we find that the GNN can distinguish both initial conditions, where as the mean field can not (note that the mean field is constant across both different initial scenarios). This validates that local neighborhood information on the mean-field level can be important to find optimal strategies.
\end{document}